\let\origvec\vec
\let\vec\origvec
\newcommand{\qw}[1][-1]{\ar @{-} [0,#1]}
\newcommand{\multigate}[2]{*+<1em,.9em>{\hphantom{#2}} \qw \POS[0,0].[#1,0];p !C *{#2},p \save+LU;+RU **\dir{-}\restore\save+RU;+RD **\dir{-}\restore\save+RD;+LD **\dir{-}\restore\save+LD;+LU **\dir{-}\restore}
\newcommand{\ghost}[1]{*+<1em,.9em>{\hphantom{#1}} \qw}
\newcommand{\rstick}[1]{*!L!<-.5em,0em>=<0em>{#1}}
\newcommand{\lstick}[1]{*!R!<.5em,0em>=<0em>{#1}}
\newcommand{\Qcircuit}{\xymatrix @*=<0em>}
\spnewtheorem{protocol}[theorem]{Protocol}{\bfseries}{\itshape}
\let\emptyset\varnothing
\title{
Non-Locality in Interactive Proofs}
\author{
Claude Cr\'epeau \inst{1}\! \thanks{Supported in part by FRQNT (INTRIQ) and NSERC (CryptoWorks21 and Discovery grant program).}
\and 
Nan Yang\inst{2}\! \thanks{Supported in part by Professors V\'aclav~Chv\'atal, Jeremy~Clark, Claude~Cr\'epeau, and David~Ford.}
}
\institute{
McGill University, Montr\'eal, Qu\'ebec, Canada.
{crepeau@cs.mcgill.ca}
\and
Concordia University, Montreal, Quebec, Canada.
{na\_yan@encs.concordia.ca}
}
\begin{document}

\maketitle

\begin{abstract}


In multi-prover interactive proofs (MIPs), the verifier is usually non-adaptive. This stems from an implicit problem which we call ``contamination'' by the verifier. We make explicit the verifier contamination problem, and identify a solution by constructing a generalization of the MIP model. This new model quantifies non-locality as a new dimension in the characterization of MIPs. A new property of zero-knowledge emerges naturally as a result by also quantifying the non-locality of the simulator.
\end{abstract}

\section{Introduction}

An \emph{interactive proof} is a dialog between two parties: a polynomial-time \emph{verifier} and an all-powerful \emph{prover} \cite{GoldwasserMiRa89,Babai85}. They agree ahead of time on some language $L$ and a string $x$. The prover wishes to convince the verifier that $x \in L$. If this is true, the prover should succeed almost all the time; if not, the prover should fail almost all the time. This is a generalization of the complexity class $\mathbf{NP}$, except instead of simply being handed a polynomial-sized witness, the verifier is allowed to quiz the prover. The set of languages that admit an interactive proof is called $\mathbf{IP}$.

The \emph{multi-prover} model was introduced in \cite{BGKW88}. This model consists of multiple, non-commu\-nicating\footnote{The precise meaning of these words shall become a lot clearer throughout the rest of this paper.} provers talking to a single verifier. The inspiration for this model was that of a detective interrogating a number of suspects, each of whom is isolated in a separate room. The suspects may share a strategy before being separated, but once the interrogation begins they are no longer able to talk to one another. The main motivation for studying this model was to remove the complexity assumptions used in the commitment schemes. We will abbreviate ``multi-prover interactive proof'' as MIP and the set of languages which can be accepted by MIPs as the boldface $\mathbf{MIP}$.

Implicit in the definition of the multi-prover model (in the original \cite{BGKW88}) is that the provers are \emph{local}. That is, not only do the provers not communicate, but they are not correlated in any way beyond sharing random bits.

An interactive proof is \emph{zero-knowledge} if the verifier learns nothing except the truth of ``$x \in L$''. This is usually defined by saying that a \emph{distinguisher} is unable to tell apart a real conversation between the prover and the verifier, and one which is generated by a lone polynomial-time \emph{simulator}. We will denote sets of zero-knowledge interactive proofs with a $\mathbf{ZK}$ bold prefix.

From a complexity perspective, the zero-knowledge aspect of interactive proofs is characterized by $\mathbf{IP} = \mathbf{ZKIP} = \mathbf{PSPACE}$ for single-prover IPs (\cite{Shamir:1992:IP,ImpagliazzoY87,Ben-Or:1988}), and $\mathbf{MIP} = \mathbf{ZKMIP} = \mathbf{NEXP}$ for multi-prover IPs (\cite{BGKW88,Fortnow:1994:PMI:194527.194556,BFL90,KILIAN89,DBLP:conf/stoc/FeigeK94,DBLP:conf/crypto/DworkFKNS92,DBLP:journals/siamcomp/FeigeK00}). The (conjectured) necessity of complexity assumptions for zero-knowledge in the single-prover case was the initial motivation for the multi-prover model.

However, there is a relationship between \emph{non}-locality and zero-knowledge which remains unexplored. Let us call this the cryptographic characterization (or perspective) of ZKMIPs.

\subsection{A Cryptographic Perspective}


The foundation of zero-knowledge is the idea of a \emph{simulator}, a machine with no more power than the verifier, which can pretend to be all-powerful provers. Obviously, this simulator cannot accomplish this task without some kind of \emph{advantage} -- independent of knowledge -- that must be provided. In single-prover zero-knowledge proofs, this advantage can be in the form of the ability to {\em rewind} computation, to discard failed simulations, or knowledge of a trapdoor in the commitment scheme. In multi-prover zero-knowledge proofs, the advantage in existing literature can be summed up as \emph{signaling}: the simulator, acting as several provers, knows secrets which real provers, in a real instance of the protocol, would not. This is then used to produce the simulation.

This signaling advantage of existing ZKMIP simulators is unnecessarily strong in the sense that if we were to require the transcript to come from multiple, non-communicating simulators (as we do with provers in real instances), then existing simulation strategies would fail (as they would require the simulators to communicate), whereas we have discovered that there exist simulation strategies which do not require communication. Instead, we only require some level of \emph{non-local correlation} between the simulators. The exact level of correlation required is a heretofore uncharacterized dimension in interactive proofs.

In order to build the framework necessary to express and characterize this dimension, we begin with an implicit problem in the existing MIP literature.

\subsection{Implicit Problem / Ad Hoc Solution}

There is an implicit problem in what we call the ``standard'' MIP model (one verifier talking to a number of provers) in the existing literature. As a lead-up to describing this problem, we invite the readers to consider the following ridiculous two-prover protocol:
\pagebreak

\begin{quote}
\rule{\linewidth}{1pt}
\begin{protocol} ( Ridiculous Protocol ) \end{protocol}

\begin{enumerate}
\item Verifier sends Prover 1 a random string $S$.
\item Prover 1 replies with a string $T$.
\item Verifier sends Prover 2 the string $T$.
\item Prover 2 replies with a string $S'$.
\item Verifier accepts if $S = S'$.
\end{enumerate}

\rule{\linewidth}{1pt}
\end{quote}

Suppose that we claim the following ridiculous theorem:

\begin{theorem} (Ridiculous Theorem) The probability that the verifier accepts in the Ridiculous Protocol is exponentially small. \end{theorem}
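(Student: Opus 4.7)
The plan is to first check whether the Ridiculous Theorem is actually true, and only then attempt a soundness argument in the style of standard MIP analyses. I would begin by writing down precisely what each prover sees at each round: Prover~1 receives the uniformly random string $S$ and chooses $T$ as a (possibly randomised) function of $S$ alone; Prover~2 receives only $T$ from the verifier and chooses $S'$ as a function of $T$ alone. The textbook intuition for non-communicating provers would then be to argue that, because the provers cannot talk during the interaction, Prover~2's distribution on $S'$ is conditionally independent of $S$ given $T$, and hence $\Pr[S' = S]$ is bounded by something like $2^{-|S|}$ over the choice of $S$.

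Before pushing this argument through, however, I would stop and test it against the obvious cheating strategy. Here the hard part lies: $T$ is chosen by Prover~1 \emph{after} seeing $S$, and the verifier faithfully forwards the \emph{entire} string $T$ to Prover~2. Thus $T$ can encode arbitrary information about $S$; in particular, the colluding strategy ``Prover~1 sets $T := S$, Prover~2 sets $S' := T$'' is fixed in advance, requires no live communication, and yields $S' = T = S$ deterministically. Under this strategy the verifier accepts with probability $1$, not with exponentially small probability.

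So my ``proof'' is really a disproof: the Ridiculous Theorem is \emph{false} as stated. The main obstacle is conceptual rather than computational, and I expect it to be the whole point of the example. The naive soundness intuition silently assumes that the verifier is an information-theoretic barrier between the provers, whereas in this protocol she is a perfect relay. Any correct soundness statement would therefore have to account for information transmitted \emph{through the verifier herself}, which is precisely the blind spot the abstract advertises; I anticipate the authors will use this counterexample as motivation for the generalised MIP model in which such verifier-mediated correlations are built into the definition of soundness (and, later, of zero-knowledge).
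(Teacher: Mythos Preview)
Your analysis is correct and matches the paper's intent exactly. The paper's ``Ridiculous Proof'' is deliberately fallacious: it invokes the no-communication assumption while ignoring that the verifier itself relays $T$ from $P_1$ to $P_2$. Immediately after stating the proof, the authors make precisely your point---that the strategy $T:=S$, $S':=T$ succeeds with probability~$1$---and use this to motivate the blind spot of verifier-mediated non-local contamination. So you have not reproduced the paper's proof (which is intentionally wrong) but rather reproduced the paper's \emph{refutation} of its own ridiculous proof, which is the substantive content here.
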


\begin{proof} (Ridiculous Proof) By the definition of MIPs, the provers cannot communicate. If Prover 2 can output an $S'$ that is the same as the uniformly random $S$ that only Prover 1 knows, then they must have communicated. Contradiction. \qed \end{proof}

The reader is astute in pointing out that steps 2 and 3 of the Ridiculous Protocol clearly show that the verifier is helping the provers by relaying the very answer it is supposed to keep secret. This is the implicit problem, exaggerated.

We will call this implicit problem ``contamination'' by the verifier. For example, a verifier talking to one prover \emph{and then} talking to another prover risks unwittingly helping the provers (up to) signal. However, the most important (and the most subtle) of those contaminations are ones where the verifier helps the provers perform a \emph{no}-signaling correlation; examples of this can be found in the following section, and also in \cite{CSST11}.

The ad hoc solution in existing literature is to cripple the verifier so that it would not do this (and much more). The verifier in existing literature is assumed to be (or constructed to be) \emph{non-adaptive}. That is, the verifier essentially chooses the questions ahead of time. This circumvents the problem of contamination.

However, this is overkill. We can address the problem of contamination without requiring the verifier to be non-adaptive. We do so by constructing a multi-prover, multi-verifier model which we shall call \emph{locality-explicit} multi-prover interactive proofs (LE-MIP). MIPs in this form have prover-verifier pairs who are talking, but no communication \emph{between} any of the pairs. At the end of a locality-explicit protocol, a special, read-only verifier accepts or rejects.

Locality-explicit protocols do not have to worry about contamination by the verifier, therefore they do not need to be non-adaptive. We will show later that LE-MIPs can be generalized to account for non-locally augmented provers without resorting to non-adaptive verifiers.

This new model offers the following advantages:



\begin{enumerate}
\item The provers and verifiers are guaranteed to be local (i.e., a very strong notion of no-commu\-nicating), if desired.
\item Any non-local resources of provers and verifiers are made explicit.
\item It is possible to enforce ``honest non-locality'' on the provers by having the \emph{verifier} provide them with non-local resources. Our model makes this explicit.
\end{enumerate}


The new characterization of ZKMIPs emerges as we naturally extend zero-knowledge to LE-MIPs, by making explicit the non-local resources of the (multiple) simulators.

\subsection{Our Contributions}


\begin{itemize}

\item We explain the aforementioned implicit problem with the standard (single-verifier) MIP model (section \ref{SEC:Standard}).

\item We describe the locality-explicit model and justify its definition by expanding on its advantages over the standard model (section \ref{SEC:CCMIP}).


\item We show that, in the LE-MIP model, a new, stronger property of zero-knowledge naturally emerges (section \ref{SEC:NEWZK}).

\item We describe a protocol which is local-verifier, local-prover and zero-knowledge which accepts oracle-3-SAT, achieving zero-knowledge without needing the provers to authenticate any messages, and prove its security (section \ref{SEC:NEXP}).

\item We describe how to simulate the above protocol with simulators which have only a specific no-signaling advantage (section \ref{PofS}).

\end{itemize}

\section{Previous Work}


The early work by Ben-Or, Goldwasser, Kilian and Wigderson asserting that $\mathbf{ZKMIP}=\mathbf{MIP}$ from \cite{BGKW88} and \cite{KILIAN89} use multi-round protocols and their (honest) verifiers are inherently signalling.
This is precisely why we address the situation in this work. Proving soundness is quite subtle in this case because the provers could use the (signalling) verifier to break binding of the commitments. In particular, soundness will not be valid if the protocol is composed concurrently with other executions of itself or even used as a sub-routine.
In recent conversations with Kilian \cite{KILIAN18}, we have learned that controlling the impact of this \emph{signaling} (via the verifier) has been a concern since the early days of MIPs.
The protocols as they are might be sound but it is not fully proven anywhere in writing.
However, it is also clear that no considerations had been given to the fact that general non-local correlations are possible via the verifier. If soundness rests on the binding property of a commitment scheme (such as those zero-knowledge proofs) and this binding property rests on the inability to achieve a certain non-local correlation then impossibility to achieve this correlation via the verifier must be demonstrated.
It is not done or hinted in these papers.

The multi-round issue we address may seem trivial because it is a known fact that multi-round MIPs may be reduced to a single round using techniques
of Lapidot-Shamir \cite{DBLP:conf/focs/LapidotS91} and Feige-Lovasz \cite{Feige:1992:TOP:129712.129783}.
Nevertheless, if interested in \emph{zero-knowledge} MIPs, commitment schemes are generally used to obtain the zero-knowledge property and thus the single-round structure is lost in the process.
Although single-round protocols bypass verifier's non-local contamination problems we describe in this work, converting multi-round protocols into single-round ones is highly inefficient and complex.
Preserving zero-knowledge while achieving single-round has turned out to be a major challenge.
Practically, keeping a multi-round protocol's structure, using only commitments to achieve zero-knowledge is very appealing.

In \cite{DBLP:conf/focs/LapidotS91}, Lapidot-Shamir proposed a parallel ZKMIP for $\mathbf{NEXP}$, but they removed the zero-knowledge claim
in the journal version \cite{DBLP:journals/jcss/LapidotS97} of their work without any explanation as of why.
Feige and Kilian \cite{DBLP:conf/stoc/FeigeK94} were the last ones to follow this approach combining techniques drawn from Lapidot-Shamir \cite{DBLP:conf/focs/LapidotS91},
Feige-Lovasz \cite{Feige:1992:TOP:129712.129783} and Dwork, Feige, Kilian, Naor, and Safra, \cite{DBLP:conf/crypto/DworkFKNS92} to achieve a ``2-prover 1-round 0-knowledge'' proof for $\mathbf{NEXP}$.
As far as we can tell, this is the only paper in the ZKMIP literature that appears to avoid the multi-round problems and the non-local contamination that we discuss.
However, note that the analysis of \cite{DBLP:conf/stoc/FeigeK94} is partly based of that of \cite{DBLP:conf/focs/LapidotS91}, and the journal version of
Feige-Kilian \cite{DBLP:journals/siamcomp/FeigeK00} does not contain their prior claim of zero-knowledge either.
All other ZKMIPs for $\mathbf{NEXP}$ in the literature are multi-round, and thus our analysis applies to them.

Similar issues are possible using more recent results such as Ito-Vidick's proof \cite{IV12} that $\mathbf{NEXP}\subseteq \mathbf{MIP}^{*}$ and
Kalai, Raz and Rothblum's proof \cite{KRR14} that $\mathbf{MIP}^{ns}\! = \mathbf{EXP}$.
The reason why these multi-round constructions may maintain their soundness despite the potential non-locality contamination (via the verifier) is the {\em non-adaptive} nature of their verifiers.
Non-adaptive verifiers cannot take advantage of information acquired in recent rounds to construct new questions to the provers: all their questions are pre-established before the interaction
with the provers start. This is a special simpler case of local verifiers. Nowhere in this large literature can one find a single statement observing the non-adaptiveness of the verifiers and
its importance to guarantee soundness of those MIPs.
Moreover, their multi-round structure requires that any straightforward extensions to $\mathbf{ZKMIP}^{*}$ or $\mathbf{ZKMIP}^{ns}$ via commitment schemes be analyzed very carefully
and the locality of the resulting verifiers be re-established. This is part of the reasons why the ZK version did not follow easily.
Recently, Chiesa, Forbes, Gur, and Spooner \cite{DBLP:journals/eccc/ChiesaFGS18} discovered a proof that $\mathbf{NEXP}\subseteq \mathbf{ZKMIP}^{*}$.
Their construction is based on refinements of Ito-Vidick's proof and along the lines of Feige-Kilian, building on algebraic structures to bypass the need of commitment schemes.
Unfortunately, this work is so complicated that we are unable to assess whether their verifier is actually non-adaptive.
And of course, this is not mentioned or proven anywhere nor available from the authors...

Bellare, Feige, and Kilian \cite{DBLP:conf/istcs/BellareFK95} considered a multi-verifier model similar to ours in order to analyze the role of randomness in multi-prover proofs.
This is completely unrelated to our goal of analyzing verifier non-local contamination.
Finally, the notion of relativistic commitment schemes put forward by Kilian \cite{kilian1990strong} and Kent \cite{PhysRevLett.83.1447} leads to several results \cite{PhysRevLett.115.030502,DBLP:journals/corr/AdlamK15,CL17} where a similar multi-verifier model is necessary in order to assess spatial separation of the provers.
The new (Non-local) Zero-Knowledge definition is 100\% fresh from this work. No prior work exists at all.

\section{The Standard MIP Model}\label{SEC:Standard}

Multi-prover interactive proofs were introduced in \cite{BGKW88}. The intuition for their model was that of a detective interrogating two suspects held in different rooms. This was formalized as follows:

\begin{definition}

Let $P_1, \ldots, P_k$ be computationally unbounded Turing machines and let $V$ be a probabilistic polynomial-time Turing machine. All machines have a read-only input tape, a read-only auxiliary-input tape, a private work tape and a random tape. The $P_i$'s share a joint, infinitely long, read-only random tape. Each $P_i$ has a write-only communication tape to $V$, and vice-versa. We call $(P_1, \ldots, P_k, V)$ a \emph{k-prover interactive protocol (k-prover IP)}.

\end{definition}

This model is essentially equivalent to that of Bell \cite{BELL64} who introduced his famous Bell's inequality to distinguish {\em local} parties from {\em entangled} parties.

Zero-knowledge MIPs were also defined in \cite{BGKW88}:

\begin{definition}\label{DEF:standardZK}

Let $(P_1, \ldots, P_k, V)$ be a k-prover IP for a language $L$.Let $\mathbf{view}(P_1, \ldots, P_k, V, x)$ denote the verifier's incoming and outgoing messages with the provers, including his coin tosses. We say that $(P_1, \ldots, P_k, V)$ is \emph{perfect zero-knowledge} {for} $L$ if there exists an expected polynomial-time machine $M$ such that for all $V'$, $\mathbf{view}(P_1, \ldots, P_k, V', x)$ and $M(x)$ are identically distributed.

\end{definition}

Let us call the above two definitions the \emph{standard MIP model}. There have also been augmentations of the model by giving the {provers} various non-local resources, such as entanglement \cite{IV12}, or arbitrary no-signaling power \cite{KRR14}. 

The first work to point out the aforementioned blind spot in the standard MIP model, although it was not worded explicitly, was \cite{CSST11}. In order to understand their point, we need to understand the following two-prover protocol.

\begin{quote}
\rule{\linewidth}{1pt}
\begin{protocol}\label{PROT:CSST_BC} ( BGKW-type commitment for bit $b$ ) \end{protocol}

$P_1$ and $P_2$ pre-share a random $n$-bit string $w$.

\begin{enumerate}

\item $V$ sends a random $n$-bit strings $r$ to $P_2$.

\item $P_2$ replies with $x \leftarrow b \times r \oplus w$.

\item $P_1$ announces to $V$ a string $w'$.

\item $V$ accepts iff $(w' \oplus x) \in \{0,r\}$.

\end{enumerate}

\rule{\linewidth}{1pt}
\end{quote}

This is a two-prover commitment protocol. Steps 1 and 2 commit, while steps 3 and 4 unveil. An intuitive proof of its binding condition is that, since the provers cannot signal, and they both need to know $r$ in order to unveil the commitment in the way they want, therefore they cannot cheat. This intuition is incomplete, as was pointed out in \cite{CSST11}, because breaking the binding condition \emph{does not require signaling}. The following protocol, known as a {\bf PR}-box, can be used to break binding without signaling.

\begin{figure}[h!]
\centering
  \mbox{\Qcircuit @C=1em @R=.7em {
      \lstick{c}  \ar[r] & \multigate{1}{{\bf PR}} & \rstick{r} \ar[l] \\
      \lstick{w' := c \times r \oplus x}   & \ghost{{\bf PR}} \ar[r]\ar[l] & \rstick{x \text{ (uniform)}}
    }}
\caption{a {\bf PR}-box}
  \label{nlbox}
\end{figure}
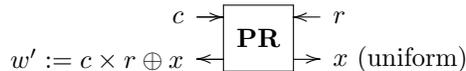

By having $P_1,P_{2}$ obtain $w',x$ via the PR-box, $P_1$ can unveil the commitment the way it wishes, $c$. This fact will become extremely important in Sections \ref{SEC:NEXP} and \ref{SEC:NEWZK}.
                 
The punchline of \cite{CSST11} is that \emph{the verifier itself can act as a PR-box for the provers without violating their no-signaling assumption}. Consider the following:


\begin{enumerate}

\item Any security proof of protocol \ref{PROT:CSST_BC} must show that it does not contain a PR-box as a subroutine.

\item More generally, any security proof of a protocol must show that no subroutine within itself can be commandeered by the provers to achieve a non-local functionally (like the PR-box).

\item Composition of protocols, for instance between the committing and the opening of commitments, must be done in such a way that provably does not create a non-local box.

\end{enumerate}

The solution proposed in \cite{CSST11} was that of \emph{verifier isolation}. Informally, this means that any message an ``isolating'' verifier sends to a set $S$ of provers must be computed solely from messages that are received from $S$. The end result is that an isolating verifier can never accidentally implement a PR-box and, in general, it will always enforce the locality of the provers. In a sense, we can think of an isolating verifier as ``local''. Our new model will make this more precise and more general.


Furthermore, existing zero-knowledge MIPs such as \cite{KILIAN89} \emph{require} that the verifier courier an authenticated message between the provers in order to obtain soundness while ensuring zero-knowledge. The gist of it goes like this:

\begin{enumerate}

\item $V$ asks $P_1$ some questions.

\item $V$ wants to check one of $P_1$'s answers with $P_2$ for consistency.

\item In order for zero-knowledge to hold, $V$ \emph{must} ask $P_2$ a question it has already asked $P_1$.

\item $P_1$ authenticates a question with a key that was committed at the beginning of the protocol and sends it to $V$.

\item $V$ sends the question and the authentication to $P_2$, who proceeds only if authentication succeeds.

\end{enumerate}

Steps 4 and 5 consists of $V$ sending a message from $P_1$ to $P_2$. Proofs that this act does not contaminate non-locally (such as simulating a PR-box) is not found in any existing MIP.
This needs to be proven, and the proof contained in \cite{KILIAN89} does not address this issue. Moreover, the zero-knowledge protocol of \cite{KILIAN89} allows $P_1$ to send an arbitrary message to $P_2$ (via the authentication key). Therefore, one cannot compose such a protocol in a nested fashion (as a subroutine call) since the inner instance would violate the no-communication assumption of the outer instance. For more details on the problems of the standard $\mathbf{MIP}$ model, see \cite{Crepeau2017}.

Existing simulators for zero-knowledge protocols such as those found in \cite{KILIAN89} needs to know how to break commitments in order to simulate. The simulator accomplishes this by acting as both provers, thereby receiving the secret string $r$ which was meant for one prover only. This standard model of zero-knowledge gives the simulator \emph{unnecessary power}, in a sense. We will discuss this further in section \ref{SEC:NEWZK}.

\section{Locality-Explicit MIP}\label{SEC:CCMIP}

The standard MIP model allows the verifier to non-locally contaminate the provers. We neutralize this problem by defining a model with multiple verifiers, each of which talks to a single prover; in turn, each prover talks to a single verifier. There are no communication tapes between the verifiers, nor are there between provers. There is a special verifier $V_{0}$ which \emph{only reads} the outputs of the other verifiers; this is the verifier that will decide to accept or reject membership to $L$. We call this model ``locality-explicit'' since the provers and verifiers are explicitly local, and if any non-local resources (such as entanglement) are available to them, then it is explicitly specified via a supplementary entity named $\widehat{P}$ for the provers and $\widehat{V}$ for the verifiers.


This model is a \emph{generalization} of the standard model because the special setting where $\widehat{P}$ is empty and $\widehat{V}$ signals for the verifiers corresponds to the standard MIP model.


\begin{definition}

An \emph{interactive Turning machine} (ITM) is a Turing machine augmented with the following tapes:

\begin{itemize}

\item $k_1$ read-only incoming communication tapes.

\item $k_2$ write-only outgoing communication tapes.

\item Private work, auxiliary-input, and random tapes.

\end{itemize}

An ITM $A$ can signal to an ITM $B$ if $A$'s write-only outgoing tape is $B$'s read-only incoming tape.

\end{definition}

\begin{definition}
Let $(\widehat{P}, P_1, \ldots, P_k, \widehat{V}, V_0, V_1, \ldots, V_k)$ be a tuple of ITMs, where the $P\!$'s are computationally all-powerful and the $V\!$'s are polynomial-time. 
For each $i$, there are two-way communication tapes between $V_i$ and $P_i$, and that for all $j$, there is a two-way communication tape between $\widehat{V}$ and $V_j$ and also between $\widehat{P}$ and $P_j$. In addition, for each $\ell$, there is a read-only tape going from $V_\ell$ to $V_0$ (where $V_0$ reads). Then, this is said to be a \emph{locality-explicit multi-prover interactive proof}.

We call $\widehat{P}$ and $\widehat{V}$ \emph{correlators} and say that the provers and verifiers are $\widehat{P}$-\emph{local} and $\widehat{V}$-\emph{local} respectively. We define the class of all MIPs with such correlators $\mathbf{MIP}^{\widehat{P}}_{\widehat{V}}$.

\end{definition}

It is perhaps easier to understand our definition with the help of figure 2.

\begin{figure}[h]
\label{FIG:CCMIP}
\centering
\includegraphics[angle=0, width=0.75\textwidth]{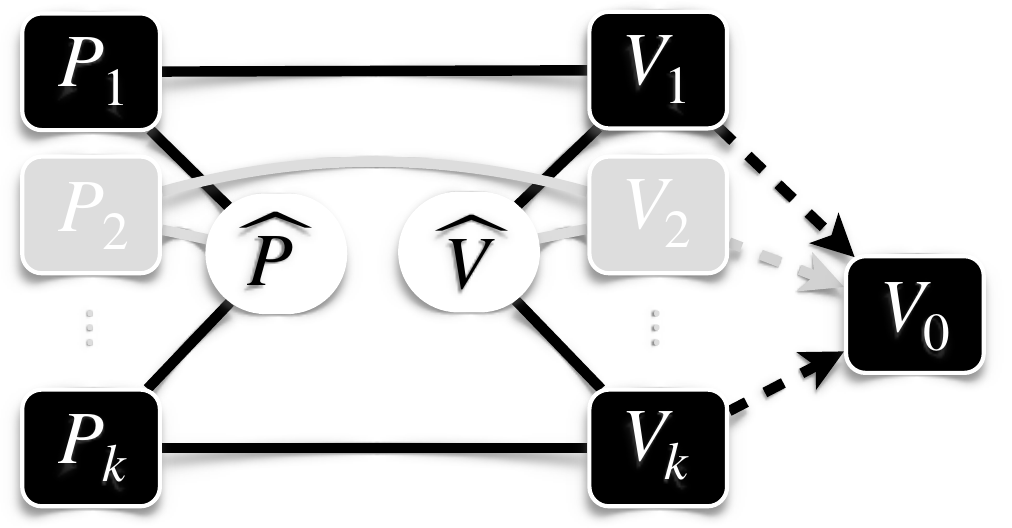}
\caption{Locality-Explicit MIP}
\end{figure}

The solid lines represents two-way communication and the dashed arrows represents one-way communication, with the arrow indicating the direction of information flow. 

We can define that an LE-MIP accepts a language $L$ if the usual soundness and completeness conditions hold:

\begin{definition}

An LE-MIP $(\widehat{V}, V_0, V_1, \ldots, V_k, \widehat{P}, P_1, \ldots, P_k)$ accepts a language $L$ if and only if

\begin{itemize}

\item (completeness) $\forall x \in L, \mathbf{Pr}[V_0(x, t_1, \ldots, t_k) = \emph{\textsf{accept}}] > 2/3$,

\item (soundness) $\forall x \notin L, \forall P_1^{\prime}, \ldots, P_k^{\prime}, \mathbf{Pr}[V_0(x, t_1, \ldots, t_k) = \emph{\textsf{accept}}] < 1/3$,

\end{itemize}

where $t_i$ is the read-only tape from $V_i$ to $V_0$ at the end of the interaction of $V_i$ with $P_i$ (or $P_i^{\prime}$) on input $x$.

\end{definition}

Note that we do not quantify over $\widehat{P}$ (nor $\widehat{V}$), as we want to use them not as (possibly malicious) participants to the protocol, but as a description of non-local resources available to the provers and verifiers.


\begin{definition}\label{DEFMIP}

An LE-MIP is \emph{local} if $\widehat{V} = \widehat{P} = \emptyset$ and all of the provers' (resp.~verifiers') random tapes are initialized with the same uniformly random string $R$ (resp.~verifiers with another, independent uniformly random string $S$)\footnote{By $\emptyset$ we mean the empty correlator that provides everyone with nothing at all as output.}.

\end{definition}

MIPs in the standard model (with local provers) are equivalent to LE-MIPs where $\widehat{P} = \emptyset$ and $\widehat{V}$ acts as a bulletin board. That is, a single verifier communicating with multiple provers is equivalent to multiple verifiers communicating with provers and each other.


In standard MIPs, it is possible that the honest (single) verifier bridges the provers non-locally. If a protocol does not desire this -- and most existing MIPs do not -- it must be proven. With local LE-MIPs, the special verifier $V_0$ decides to accept or reject. This verifier cannot communicate with anyone else, avoiding the aforementioned problem of contamination.

%
%
%
%
%

\subsection{Zero-Knowledge LE-MIPs}\label{SEC:NEWZK}


%

Zero-knowledge is defined by simulations, the fundamental idea that if a transcript can be produced by an entity (simulator) with no more power than one (verifier) interrogating all-powerful provers, then no knowledge is gained.

The simulator of single-prover IP and standard MIP are equal to the verifier in computational power, but they do have ``advantages'' which allow them to fake transcripts. For single-prover IPs, the simulator is allowed to rewind computation; for standard MIPs, the simulator is given a (commitment-breaking) secret. Those advantages are, of course, independent of knowledge.

LE-MIPs naturally induces a new advantage for the simulator: non-local correlations. This is a very powerful advantage. Using the correct non-local correlations, simulators do not need to rewind, do not need to pretend to be multiple (isolated) provers, and do not need to know any commitment-breaking secrets. In short, they do not need to signal. Multiple, no-signaling simulators can even produce transcripts in ``real-time'' (example will follow) if the proper correlations are used.


%
%

\begin{definition}

Let $\mathcal{M} = (\widehat{M}, M_1, \ldots, M_k)$ be a tuple of polynomial-time ITMs. Each machine has a random tape, and every random tape is initialized with the same random bits. For $1 \leq i \leq k$, there is a two-way communication tape between $\widehat{M}$ and $M_i$. There are no communication tapes between any of the $M_i\!$'s. Then this is called a tuple of \emph{locality-explicit simulators} and $\widehat{M}$ is the \emph{locality class} of $\mathcal{M}$, which will be abbreviated $\widehat{M}$\emph{-local}.

\end{definition}
%
%
%
%
%

\begin{definition}[White-box version]

Let $\mathcal{PV}=(\widehat{P}, P_1, \ldots, P_k, \widehat{V}, V_0, V_1, \ldots, V_k)$ be an LE-MIP for language $L$. If there exists a correlator $\widehat{S}$ such that for all verifiers $(V^\prime_0, V^\prime_1, \ldots, V^\prime_k)$,
there exists $(S_1, \ldots, S_k)$ for all correlator $\widehat{V}'$, such that for all $x\in L$ the transcripts of conversations $$(\widehat{P}, P_1, \ldots, P_k, \widehat{V}', V^\prime_0, V^\prime_1, \ldots, V^\prime_k)(x)$$
and those generated by $$(\widehat{S} \cup \widehat{V}', V^\prime_0, S_1, \ldots, S_k)(x)$$ are identically distributed, where $(\widehat{S}, S_1, \ldots, S_k)$ is a tuple of locality-explicit simulators, then we say that $\mathcal{PV}$
is a \emph{$\widehat{S}$-local perfect zero-knowledge LE-MIP} for $L$.

We will denote the set of all ZK LE-MIPs where the provers, verifiers, and simulators are $\widehat{P}$-local, $\widehat{V}$-local, and $\widehat{S}$-local by $$\mathbf{ZK}^{\widehat{S}}\mathbf{MIP}^{\widehat{P}}_{\widehat{V}}.$$

Let $\mathbb{S, P, V}$ be sets of correlators. We will denote, by convention, $$\mathbf{ZK}^{\mathbb{S}} \mathbf{MIP}^{\mathbb{P}}_{\mathbb{V}}$$ as the set of all ZK LE-MIPs where each correlator comes from each of the respective sets.

\end{definition}

\begin{definition}[Black-box version]

Let $\mathcal{PV}=(\widehat{P}, P_1, \ldots, P_k, \widehat{V}, V_0, V_1, \ldots, V_k)$ be an LE-MIP for language $L$. If there exists a tuple of locality-explicit simulators $(\widehat{S}, S_1, \ldots, S_k)$,
such that for all verifiers $(\widehat{V}',V^\prime_0, V^\prime_1, \ldots, V^\prime_k)$, such that for all $x\in L$ the transcripts of conversations $$(\widehat{P}, P_1, \ldots, P_k, \widehat{V}', V^\prime_0, V^\prime_1, \ldots, V^\prime_k)(x)$$
and those generated by $$(\widehat{S}, V^\prime_0, S_1(V^\prime_{1}), \ldots, S_k(V^\prime_{k}))(x)$$ (where the $V^\prime_{i}$ still have access to $\widehat{V}'$) are identically distributed, then we say that $\mathcal{PV}$ 
is a \emph{$\widehat{S}$-local perfect (black-box) zero-knowledge LE-MIP} for $L$.

We will denote the set of all BBZK LE-MIPs where the provers, verifiers, and simulators are $\widehat{P}$-local, $\widehat{V}$-local, and $\widehat{S}$-local by $$\mathbf{ZK}^{\widehat{S}}_{\mathbf{BB}}\mathbf{MIP}^{\widehat{P}}_{\widehat{V}}.$$

Let $\mathbb{S, P, V}$ be sets of correlators. We will denote, by convention, $$\mathbf{ZK}^{\mathbb{S}}_{\mathbf{BB}} \mathbf{MIP}^{\mathbb{P}}_{\mathbb{V}}$$ as the set of all BBZK LE-MIPs where each correlator comes from each of the respective sets.

\end{definition}

Our motivations for the above definitions are twofold.

First, a simulator (or simulators) should not have more power than necessary. If two \emph{local} simulators can output for two \emph{local} verifiers, then it is not necessary to have a single simulator (equivalent to two \emph{signaling} simulators) do the job. Allowing simulators to signal (equivalently, having a single simulator) in the multi-prover setting is analogous to allowing unbounded running-time simulation in single-prover zero-knowledge. In general, finding the minimal $\widehat{S}$ that will allow simulation establishes how little extra is needed to obtain the zero-knowledge property.

Second, the non-locality of simulators is a characterization of the resilience of zero-knowledge. A protocol with local simulators which can withstand arbitrary (malicious) verifiers is more resilient than one in which signaling simulators are needed.

This may be of practical interest, if transcripts are timestamped. For example, under the relativistic assumption that one may not signal faster-than-light, one may be able to distinguish two spatially separated simulators from two spatially separated verifiers, if the simulators need to signal (transmit a commitment-breaking secret) in order to generate a transcript. On the other hand, if two entangled simulators are sufficient to produce the transcript, then they are indistinguishable from real verifiers and provers. Our protocol \ref{IsolatingZKMIP} can be modified as to let entangled simulators do their work, without needing PR-boxes or signaling. Details in section \ref{SEC:NEXP}.


%
%
%
%
%
%
%
%
%

\subsection{The Power of LE-MIPs}\label{SEC:power}

Local LE-MIPs form a subclass of standard MIPs. They are, by design, more restricted in what you can make the verifier do. An immediate question is whether this is \emph{too} restrictive. Perhaps, in all interesting cases, it is necessary for a single verifier to go back-and-fourth between provers, using previous discussions to generate new questions.

The answer is that, of all the literature we have surveyed, almost all protocols can be re-written in a local-verifier manner without any loss of functionality. We explicitly demonstrate this for the multi-prover protocol for oracle-3-SAT in \cite{BFL90}. The protocol details can be found in the appendix. For the purpose of our discussion, we only need to look at the general form of the protocol:

\begin{quote}
\rule{\linewidth}{1pt}
\begin{protocol}\label{BFL_classic}( BFL Classic, Single-Verifier )\end{protocol}

\begin{enumerate}

\item $V$ asks $P_1$ some questions non-adaptively.

\item $V$ chooses a question $Q$ from the pool of questions which were asked to $P_1$.

\item $V$ asks $Q$ to $P_2$.

\item $V$ accepts if the interaction with $P_1$ was successful, and the answer from $P_2$ is consistent with those of $P_1$.

\end{enumerate}

\rule{\linewidth}{1pt}
\end{quote}

The crucial observation is that $V$ does not \emph{adaptively} ask questions to $P_1$. Therefore, the questions asked on that entire side of the conversation can be selected in advance, and thus they can be shared in advance with a second verifier. We can therefore naturally rewrite the BFL classic protocol as a local LE-MIP in the following way. The reader can check the details in the appendix, and in section 3 of \cite{BFL90}.

\begin{quote}
\rule{\linewidth}{1pt}
\begin{protocol} ( BFL as an LE-MIP ) \end{protocol}

\begin{enumerate}

\item $V_1$ prepares the questions which it will ask $P_1$.

\item $V_1$ chooses a question $Q$ from the above list and shares it with $V_2$.

\item LE-MIP begins. All parties are local as per definitions.

\item $V_1$ asks the questions to $P_1$.

\item $V_2$ asks $Q$ to $P_2$.

\item $V_0$, reading the responses, decides to accept or reject, based on the same criteria as in protocol \ref{BFL_classic}.

\end{enumerate}

\rule{\linewidth}{1pt}
\end{quote}

The BFL protocol is for oracle-3-SAT, which is $\mathbf{NEXP}$-complete. Rewritten as a local LE-MIP, it circumvents all non-locality issues we have mentioned. Thus, we can conclusively say that ``$\mathbf{MIP}^{\emptyset}_{\emptyset} = \mathbf{MIP} = \mathbf{NEXP}$''; no transformation to single-round MIP necessary, and no need to invoke the general theory of PCPs.

\section{$\mathbf{ZK}^{\mathbf{PR}}\mathbf{MIP}^{\emptyset}_{\emptyset} = \mathbf{NEXP}$}\label{SEC:NEXP}

The question which follows naturally is whether there exists a \emph{zero-knowledge}, local LE-MIP for $\mathbf{NEXP}$. The existing technique for achieving zero-knowledge in MIP \cite{BGKW88,KILIAN89} requires the (single) verifier to courier an authenticated message between provers. This is not possible with local-verifier LE-MIPs. We show that there is a way around that constraint.

By adapting the protocol from \cite{BFL90}, we will exhibit a protocol with the following properties:

\begin{enumerate}

\item The provers and verifiers are local: $\widehat{V} = \widehat{P} = \emptyset$.

\item The simulators need only access to instances of PR-boxes to work. That is, $\widehat{M}$ simply computes indexed instances of $\mathbf{PR}$-boxes. We will abbreviate this as ``$\mathbf{PR}$-local.''

\end{enumerate}

We may succinctly summarize the above as $\mathbf{ZK}^{\mathbf{PR}}\mathbf{MIP}^{\emptyset}_{\emptyset} = \mathbf{NEXP}$, where $\mathbf{PR}$ denotes a correlator which simply computes $\mathbf{PR}$-boxes for the simulators.



The generic way of turning an interactive proof into a zero-knowledge one is by running it in committed form \cite{BGKW88,KILIAN89}. With this technique, provers commit their answers instead of directly responding, and use cryptographic techniques to convince the verifier that the answers are correct.

As shown in section \ref{SEC:power}, the BFL protocol can be turned into a local LE-MIP. If we try to turn it into a zero-knowledge LE-MIP by having the provers commit their answers (for example using protocol \ref{PROT:CSST_BC} as commitment), we run into a problem. In order to achieve zero-knowledge, the provers \emph{must} ensure that the question $P_2$ receives from $V_{2}$ is one of the questions which $V_{1}$ has asked $P_1$. On the other hand, since the provers and verifiers are local, the provers cannot communicate, nor can they ask the verifiers to courier authenticated messages between them.


Our solution essentially asks the provers to (strongly-universal-2) hash the selected committed answer with a key that is based on the verifier's question. We force $V_2$ to behave honestly (to ask a question that $V_1$ has asked) by making bad questions meaningless. If the verifiers ask the provers the same question, they will receive the same hash of the same answer. Otherwise, they will receive two unrelated random hash values. 




We need the $\mathbf{PR}$ commitment (protocol \ref{MIP_CHSHBC}), which is secure in the local setting as previously proved in \cite{PhysRevLett.83.1447,CSST11,PhysRevLett.115.030502}.

\subsection{The Protocols}
\label{MIP_local}

The following is a $\mathbf{PR}$-type commitment that is perfectly concealing and statistically binding.
In general, we use the commitment-box notation ``\fbox{$b$}'' as the name of a commitment to bit $b$ in the next two protocols.

\begin{quote}
\rule{\linewidth}{1pt}
\begin{protocol}\label{MIP_CHSHBC}\label{MIP_CHSH} A statistically binding, perfectly concealing commitment protocol to bit $b$.\end{protocol}

All parties agree on a security parameter $1^{k}$.\\ 
$P_1$ and $P_2$ partition their private random tape into two $k$-bit strings $w_{1},w_{2}$.\\

{\bf Pre-computation phase:}

\begin{itemize}

\item $V_{1}$ samples two $k$-bit strings $z_{1},z_{2}$ independently and uniformly, and provides them to $V_{2}$.

\item $V_{1}$ sends $z_{1}$ to $P_1$ and $V_{2}$ sends $z_{2}$ to $P_2$.

\end{itemize}

{\bf Commit phase:}

\begin{itemize}

\item $P_1$  commits $b$ to $V_{1}$ as \fbox{$b$} $ = (b \times z_{1})  \oplus  w_{1}$, where $b \times z_{1}$ is a multiplication in $\mathbb{F}_{2^{n}}$.

\item $P_2$ sends $V_{2}$: $d = (w_{1} \times z_{2}) \oplus w_{2}$.

\end{itemize}

{\bf Unveiling phase:}

\begin{itemize}

\item $P_1$ sends $w_{1},w_{2}$ to $V_{1}$.

\item $V_{1}$ computes $b = 1$ if $\text{\fbox{$b$}} \oplus  w_{1} = z_{1}$, or $b =0$ if $\text{\fbox{$b$}} =  w_{1}$. 

\item $V_{0}$ {\bf rejects} if $\text{\fbox{$b$}} \oplus w_{1}$ is anything but $z_{1}$ or $0$, or if $d \oplus w_{2} \neq w_{1} \times z_{2}$
and {\bf accepts} $b$ otherwise.

\end{itemize}

\rule{\linewidth}{1pt}
\end{quote}

Below is the zero-knowledge, local LE-MIP for oracle-3-SAT (Protocol \ref{IsolatingZKMIP}). The basis of protocol \ref{IsolatingZKMIP} is the localized BFL protocol we presented in section \ref{SEC:power} (details in the appendix). 
A note on notation: for a circuit $f$, we will denote $f\!\left( \text{\fbox{$x$}} \right)$ as the gate-by-gate committed circuit evaluated with x as the input.
We also use statements such as ``$P_1$ proves to $V_1$ that \text{\fbox{$\Omega_1$}} was computed correctly''. The reader is expected familiarity with zero-knowledge computations on committed circuits as put forward by \cite{BrassardCa86,BrassardCb86,ImpagliazzoY87,KILIAN89}.

\begin{quote}
\rule{\linewidth}{1pt}
\begin{protocol}\label{IsolatingZKMIP} A local zero-knowledge LE-MIP for oracle-3-SAT\end{protocol}

Let $x=(B,r,s)$, an instance of oracle-3-SAT, be the common input, let $k = \left| x \right| = r+3s+3$, and let $\Lambda$ be the verifier's program in protocol \ref{PROT:BFL_appendix} (see appendix).

\begin{enumerate}

\item {\bf Pre-computation:}

	\begin{enumerate}

	\item $V_{1}$ samples two $k$-bit strings $z_{1},z_{2}$ independently and uniformly, and provides them to $V_{2}$.

	\item $V_{1}$ selects $k+3$ random bit strings $R_1,...,R_{k+3}$ (size specified implicitly by $\Lambda$) and evaluates the circuit of $\Lambda$ using the $R_i$ as randomness, resulting in questions $Q_1,...,Q_{k+3}$, and provides them to $V_{2}$

	\item $V_{1}$ randomly chooses $i$, $1 \leq i \leq k+3$, the index of an oracle query that will be made to both $P_1$ and $P_{2}$. $V_{1}$ provides $i$ to $V_{2}$.

	\item $V_{1}$ sends $z_{1}$ to $P_1$ and $V_{2}$ sends $z_{2}$ to $P_2$ for future commitments.
	
	\item All parties agree on a family of strongly-universal-2 hash functions $\{H_i\}$ indexed by $k$-bit keys.
	
	\item $P_1$ and $P_2$ agree on a $k$-bit key $\gamma$, an index to the above family.
	
	\item $P_1$ commits \fbox{$\gamma$} to $V_{1}$.

	\end{enumerate}

\item {\bf Sumcheck with oracle:}

	\begin{itemize}

	\item Let $f$ be the arithmetization obtained in protocol \ref{PROT:BFL_sumcheck}, 
let $z$ be a string from $I^r$ and $Q_{k+1}, Q_{k+2}, Q_{k+3}$ be strings of $I^s$ as generated in protocol \ref{PROT:BFL_appendix}. 
$V_{1}$ and $P_1$ execute protocol \ref{PROT:BFL_sumcheck} in committed form.
At the end of this phase, $P_1$ shows that the committed final value is equal to
$$f\!\left( z, Q_{k+1}, Q_{k+2}, Q_{k+3}, \text{\fbox{$A(Q_{k+1})$}}, \text{\fbox{$A(Q_{k+2})$}}, \text{\fbox{$A(Q_{k+3})$}} \right), $$
an evaluation in committed form of $f$ using the committed values that were used during the protocol's loop. If this fails, $V_{1}$ instructs $V_{0}$ to reject.

	\end{itemize}

\item {\bf Multilinearity test:}

\begin{enumerate}

	\item For $1 \leq i \leq k$:

	\begin{enumerate}
		\item $V_1$  sends $Q_{i}$ to $P_{1}$,
		\item $P_1$ commits his answer as \fbox{$A(Q_i)$}.
	\end{enumerate}
	
	\item $P_1$ and $V_{1}$ evaluate a circuit description of $\Lambda$ in committed form with inputs $\text{\fbox{$A(Q_1)$}}, \ldots, \text{\fbox{$A(Q_k)$}}$ to verify proper linearity among them.
$P_1$ unveils the circuit's committed output. If it rejects, $V_{1}$ instructs $V_{0}$ to reject.

	\end{enumerate}

\item {\bf Consistency test:}

	\begin{enumerate}

	\item $V_1$ sends $i$ to $P_1$.
	
	\item $P_1$ computes $\text{\fbox{$\Omega_1$}} = \text{\fbox{$A(Q_i)$}} \oplus H_{\text{\fbox{$\gamma$}}}\!\left( Q_{i}  \right)$ and sends \text{\fbox{$\Omega_1$}} to $V_1$.
	
	\item $P_1$ proves to $V_1$ that \text{\fbox{$\Omega_1$}} was computed correctly, from the existing commitments.
	
	\item $P_1$ unveils \text{\fbox{$\Omega_1$}} for $V_1$, who gets $\Omega_1$.

	\item $V_{2}$ sends $Q_{i}$ to $P_2$ (recall that this was pre-agreed in step 1.(c))

	\item $P_2$ responds to $V_2$ with $\Omega_2 = A(Q_i) \oplus H_{\gamma}\!\left( Q_{i}  \right)$. 

	\item $V_{0}$ accepts if and only if all of the following conditions are met:
	
		\begin{itemize}
	
		\item $\Omega_1 = \Omega_2$
		\item All commitments which have been unveiled are valid.
		\item $V_{1}$ did not reject in the two previous 	cases
		
		\end{itemize}

	\end{enumerate}

\end{enumerate}

\rule{\linewidth}{1pt}
\end{quote}

\subsection{Proofs of Security}\label{PofS}

\subsubsection{Locality}~\\


Since the protocol is written as an LE-MIP in which $\widehat{P} = \widehat{V} = \emptyset$, the protocol is local by definition \ref{DEFMIP}.

\subsubsection{Completeness}~\\

Completeness follows from the completeness of the underlying protocol \cite{BFL90}, and the fact that the commitment protocol (protocol \ref{MIP_CHSH}) is well-defined for honest provers (who will never send a commitment that they cannot unveil).

\subsubsection{Soundness}~\\


Without loss of generality, we may assume that the soundness error in the BFL protocol to be $1/3$, through sequential amplification. The probability that our commitment scheme (protocol \ref{MIP_CHSH}) fails binding is exponentially small in $k$.
Local probabilistic provers are equivalent to local deterministic provers. This is because the success probability $\alpha$ of randomized provers of breaking soundness is an average over the randomized provers' random tapes. Each instance of a random tape represents a deterministic strategy. Therefore there is a deterministic strategy which succeeds with probability at least $\alpha$, and hence we only need to consider local deterministic provers.

Since $P_1$ is deterministic, we may unambiguously consider what happens if we were to ``rewind'' the prover machine. Suppose that at some point $P_1$ unveils a particular commitment $c$ to $0$. We rewind $P_1$ and let $V_1$ make different choices before that point.
Suppose that, with these alternate choices, $P_1$ then unveils $c$ to $1$ (an attempt to break binding). Because of locality, $P_1$'s behavior is independent of what $P_2$ receives (namely $z_2$). Therefore, there is only \emph{one} such $z_2$ which $V_0$ will ultimately accept as a valid unveiling of $c$ in both ways (recall that our commitment is statistically binding).

Therefore, in the worst case, for every commitment there exists a sequence of interactions between $V_1$ and $P_1$ such that $P_1$ will attempt to break the binding of that commitment. Each such commitment-breaking corresponds to at most one string $z_2$ that will actually work.

Let us denote the set of such binding-breaking strings by $B$. If $z_2 \notin B$, then the provers \emph{will not break binding}, and the soundness error is reduced to that of the underlying protocol (at most $1/3$). On the other hand, since $\left| B \right| < \mathbf{poly}(k)$, the probability that $z_2 \in B$ is at most $ \mathbf{poly}(k)/2^k$.

Therefore, the soundness error of our protocol is at most $$Pr[ z_2 \notin B\text { and underlying protocol accepts}] + Pr[z_2 \in B] \leq \frac{1}{3} + \frac{\mathbf{poly}(k)}{2^k}.$$

\subsubsection{Zero-Knowledge}

The simulation will be divided in two parts. In the first part, the simulator produces a transcript of the \emph{pre-computation}, \emph{multilinearity test} and \emph{sumcheck with oracle} parts, which involves only interactions with $V_1$. In the second part, the simulator will fake a valid \emph{consistency test}.

\begin{quote}
\rule{\linewidth}{1pt}
\begin{protocol}\label{CCSIM1} ( Perfectly Indistinguishable, $\mathbf{PR}$-Local Simulator for Protocol \ref{IsolatingZKMIP}, Part 1) \end{protocol}

The setup:

\begin{itemize}

\item Let $( \widehat{S}, S_1, S_2)$ be a set of locality-explicit simulators.

\item $S_1$ and $S_2$ can send $\widehat{S}$ an index along with a bit.

\item $\widehat{S}$ completes the indexed $\mathbf{PR}$ box (protocol \ref{PROT:CSST_BC}) for both simulators.

\end{itemize}

The simulation strategy:

\begin{enumerate}

\item The simulators agree on unique indices for every commitment used in the protocol.

\item $S_1$ interacts with $V_1$ the way $P_1$ would. Whenever $P_1$ should commit, $S_1$ commits to random bits, just like the single-simulator from Sec.~\ref{SEC:NEXP}.

\item For each commitment, $V_2$ sends $S_2$ a string $s$. $S_2$ sends to $\widehat{S}$ the index of the commitment and $s$.

\item $\widehat{S}$ runs the $\mathbf{PR}$ box (protocol \ref{PROT:CSST_BC}) and replies with $V_2\!$'s half of the output.

\item Whenever $S_1$ needs to unveil a commitment, it can be unveiled in the way $S_1$ desires by sending the corresponding index and bit to $\widehat{S}$.

\item $\widehat{S}$ completes the corresponding $\mathbf{PR}$ box which outputs $t$. $\widehat{S}$ sends $t$ to $S_1$.

\item $S_1$ sends $t$ to $V_1$.

\end{enumerate}

\rule{\linewidth}{1pt}
\end{quote}

The second part (the consistency test) can be done by having the simulators ignore the question.

\begin{quote}
\rule{\linewidth}{1pt}
\begin{protocol}\label{CCSIM2} ( Perfectly Indistinguishable, $\mathbf{PR}$-Local Simulator for Protocol \ref{IsolatingZKMIP}, Part 2) \end{protocol}

\begin{enumerate}

\item $V_1$ sends $i$ to $S_1$.

\item $S_1$ computes $\text{\fbox{$\Omega_1$}} = H_{\text{\fbox{$\gamma$}}}\!\left( Q_{i}  \right)$.

\item Using $\widehat{S}$ to break binding, $S_1$ convinces $V_1$ that $\text{\fbox{$\Omega_1$}}$ is actually $\text{\fbox{$A(Q_i)$}} \oplus H_{\text{\fbox{$\gamma$}}}\!\left( Q_{i}  \right)$.

\item $S_1$ unveils $\text{\fbox{$\Omega_1$}}$ for $V_1$, who gets $\Omega_1 = H_{\text{{$\gamma$}}}\!\left( Q_{i}  \right)$.

\item $V_2$ sends $Q_i'$ to $S_2$.

\item $S_2$ responds with $\Omega_2 = H_{\text{{$\gamma$}}}\!\left( Q_{i}'  \right)$.

\end{enumerate}

\rule{\linewidth}{1pt}
\end{quote}

By the properties of the strongly-universal-2 hash $H$, if $Q_i = Q_i'$ then $\Omega_1 = \Omega_2$. Otherwise $\Omega_1 \neq \Omega_2$ with probability exponentially close to one. This produces the result as desired. The simulators then feed the transcripts to $V_0$, and terminates simulation.

\subsection{Entangled Simulators}\label{SEC:entangledsim}

The binding condition of commitment used above (protocol \ref{MIP_CHSH}) can be broken given $\mathbf{PR}$-boxes. However, if the verifier were willing to tolerate approximately $15\%$ of errors in the provers' unveiling string ($z_{1}$ or $0$), then it is possible to break binding with shared entanglement \cite{10.1007/978-3-540-45078-8_1} while maintaining soundness against local provers. Using this weakened version of commitment in place of protocol 
\ref{MIP_CHSH} yields a $\mathbf{ZK}^{\overset{\text{\tiny{poly}}}{\mathbb {|LOC\rangle}}}\mathbf{MIP}^{\emptyset}_{\emptyset}$ protocol for a
$\mathbf{NEXP}$-complete language ($\mathbf{ZK}^{\overset{\text{\tiny{poly}}}{\mathbb {|LOC\rangle}}}$ denotes shared entanglement for the simulator; consult Appendix \ref{AppB} for more notations).
We leave the details of this modification to the reader.

\section{Conclusions and Future Work}\label{SEC:conclusions}

$\mathbf{MIP}$ is cryptographic. $\mathbf{NEXP}$ is complexity theoretic. Although there exists a non-adaptive MIP which accepts NEXP (resolving the complexity of $\mathbf{MIP}$ and avoiding contamination), there seems to be a bit of an unexplored dimension on the zero-knowledge (cryptographic) side of things. LE-MIPs accomplishes two things: it makes explicit that non-adaptive verifiers are not necessary to avoid contamination, and it induces the question of non-locality with respect to zero-knowledge.
We close with four open questions. 

First, although the provers and verifiers of protocol \ref{IsolatingZKMIP} are local, the simulators are not -- they use PR-boxes. We do not know whether it is possible to simulate protocol \ref{IsolatingZKMIP} with \emph{local} simulators. In fact, we conjecture that there does not exist a $\mathbf{ZK}^{\emptyset}\mathbf{MIP}^{\emptyset}_{\emptyset}$ protocol for any $\mathbf{NEXP}$-complete language.


Second, as we have sketched out in section \ref{SEC:entangledsim}, by weakening the commitment scheme used, we get $\mathbf{ZK}^{\overset{\text{\tiny{poly}}}{\mathbb {|LOC\rangle}}}\mathbf{MIP}^{\emptyset}_{\emptyset} = \mathbf{NEXP}$. What is a minimal $\widehat{S}$ such that $\mathbf{ZK}^{\widehat{S}}\mathbf{MIP}^{\emptyset}_{\emptyset} = \mathbf{NEXP}$?

Third, as of the time of this writing, it is an open question whether $\mathbf{NEXP} \subsetneq \mathbf{MIP^*}$ \cite{IV12}. Under the locality-explicit setup, we ask a slightly more general question: does there exist a correlator $\widehat{P}$ and a corresponding LE-MIP which accepts a language $\notin \mathbf{NEXP}$? We remind the reader that characterizing the complexity classes of MIPs where the provers have non-local resources are generally open questions.

Finally, although the verifier's contamination is undesirable (in the standard MIP model), is it possible to turn it into a resource? For example, given local provers, let the verifier provide them with some non-local resources, such PR-boxes or entanglement that can be simulated in polynomial-time. This can be seen as ``enforceable honest non-local resources.'' Malicious provers would not be able to use these resources at will. Perhaps this concept would be useful in the design of multi-prover protocols.


\section*{Acknowledgements}
We would like to thank
G.~Brassard,
A.~Chailloux,
S.~Fehr,
J.~Kilian,
S.~Laplante,
J.~Li,
A.~Leverrier,
A.~Massenet,
S.~Ranellucci,
L.~Salvail,
C.~Schaffner,
and
T.~Vidick
for various discussions about earlier versions of this work. We would also like to thank Jeremy Clark for his insightful comments. Finally, we are grateful to Raphael Phan and Moti Yung for inviting us to publish a lead-up paper to this work as an {\em Insight Paper} at MyCrypt 2016.

\bibliographystyle{ieeetr}


\newcommand{\RGRB}[0]{ \text{$\mathbf{R  \! \frac{GR}{BG} \! B}$} }
\appendix

\section{Babai, Fortnow and Lund's MIP for Languages in NEXP}\label{SEC:appendix_A}

This section describes a variant of the multi-prover protocol for oracle-3-SAT found in \cite{BFL90}. We refer to this as the BFL protocol, or BFL classic.
\begin{definition}

Let $r, s > 0$ be integers. Let $z, b_1, b_2, b_3$ be strings of variables, where $|z| = r$ and $|b_i| = s$. Let $B(z, b_1, b_2, b_3, t_1, t_2, t_3)$ be a Boolean formula in $r+3s+3$ variables. A Boolean function $A : \{0, 1\}^s \rightarrow \{0, 1\}$ is a \textit{3-satisfying oracle} for $B$ if $$B(z, b_1, b_2, b_3, A(b_1), A(b_2), A(b_3)) = 1$$ for every string $z, b_1, b_2, b_3$.

$B$ is \textit{oracle-3-satisfiable} if such a function $A$ exists.

The \textit{Oracle-3-SAT} problem $(B, r, s)$ asks whether a Boolean formula $B$ is oracle-3-satisfiable, where $r$ and $s$ denote the lengths of $z$ and $b_i$, as above.

\end{definition}

\begin{lemma}

Oracle-3-SAT is $\mathbf{NEXP}$-complete.

\end{lemma}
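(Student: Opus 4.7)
The plan is to establish both containment in $\mathbf{NEXP}$ and $\mathbf{NEXP}$-hardness.

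For containment, I would describe a nondeterministic exponential-time verifier that, on input $(B, r, s)$, guesses the truth table of $A$ (which has $2^s$ bits, so exponential size since $|B| \geq r + 3s + 3$), and then checks that $B(z, b_1, b_2, b_3, A(b_1), A(b_2), A(b_3)) = 1$ for every assignment to $(z, b_1, b_2, b_3)$. The loop over $2^{r+3s}$ assignments, each verified in $\mathbf{poly}(|B|)$ time, fits within exponential time, yielding the $\mathbf{NEXP}$ upper bound.

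For hardness, I would reduce an arbitrary $L \in \mathbf{NEXP}$ decided by a nondeterministic machine $M$ running in time $T(n) = 2^{n^c}$ to Oracle-3-SAT. Applying the Cook--Levin transformation to $M$'s computation tableau on input $x$ of length $n$ yields a 3-CNF $\phi_x$ with $O(T(n)^2)$ clauses over $O(T(n)^2)$ variables such that $x \in L$ iff $\phi_x$ is satisfiable. The essential additional property I would invoke is that $\phi_x$ is \emph{succinctly representable}: because each clause captures only a constant-size local consistency constraint of the tableau (plus input/accept boundary conditions), there exists a polynomial-size circuit $C_x$ that, given a clause index $z \in \{0,1\}^r$ with $r = O(n^c)$, returns the variable indices $b_1^\star, b_2^\star, b_3^\star \in \{0,1\}^s$ (with $s = O(n^c)$) appearing in the $z$-th clause together with their signs $\sigma_1, \sigma_2, \sigma_3$.

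With this succinct representation in hand, I would construct the oracle-3-SAT instance $(B, r, s)$ by letting $B(z, b_1, b_2, b_3, t_1, t_2, t_3)$ be the polynomial-size formula that simulates $C_x(z)$ to obtain the intended $(b_1^\star, b_2^\star, b_3^\star, \sigma_1, \sigma_2, \sigma_3)$, and outputs $1$ unless $(b_1, b_2, b_3) = (b_1^\star, b_2^\star, b_3^\star)$ and the clause $(t_1 \oplus \sigma_1) \vee (t_2 \oplus \sigma_2) \vee (t_3 \oplus \sigma_3)$ is falsified. Since the universal quantification over $(b_1, b_2, b_3)$ activates the constraint only on the intended triple for each clause index $z$, a 3-satisfying oracle $A$ exists precisely when the function $A : \{0,1\}^s \to \{0,1\}$, viewed as an assignment to $\phi_x$'s variables, satisfies every clause, i.e., iff $x \in L$. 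The main obstacle is the succinct representability step: one needs a careful linear indexing of the tableau cells so that the local transition and boundary constraints translate into a polynomial-time (hence polynomial-size circuit) map from clause indices to their variables and signs. This is standard but is where essentially all of the combinatorial work lives; the rest of the reduction is a mechanical assembly of $B$ from $C_x$ together with comparators and the clause-satisfaction check.
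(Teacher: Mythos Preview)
Your proposal is correct and follows the standard Cook--Levin-with-succinct-description argument that originates with Babai, Fortnow, and Lund. Note, however, that the paper itself does not actually prove this lemma: it is stated in the appendix without proof, the result being imported from \cite{BFL90}. So there is no ``paper's own proof'' to compare against; your write-up supplies exactly the argument the paper defers to the literature.

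One small point worth tightening: in your hardness reduction you should make explicit that the map $x \mapsto (B, r, s)$ is computable in time polynomial in $|x|$ (not merely that $B$ has polynomial size), since this is what makes it a valid Karp reduction. You gesture at this via the polynomial-size circuit $C_x$, but stating outright that $C_x$---and hence $B$---is constructible in polynomial time from $x$ and the description of $M$ would close the loop cleanly.
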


\begin{definition}
Let $\mathbb{F}$ be an arbitrary field.
Let $\phi : \{0, 1\}^m \rightarrow \{0, 1 \}$ be a Boolean function. An \emph{arithmetization} of $\phi$ is a polynomial $f(x_1,\ldots,x_m) \in \mathbb{F}[X_1, \ldots, X_m]$ such that for all $z \in \{0, 1\}^m$, $\phi(z) = 0 \Leftrightarrow f(z) = 0$. A specific one is given in \cite{BFL90}, proposition 3.1 .\\

Equivalently, the $\phi(z) = 0 \Leftrightarrow f(z) = 0$ condition can be replaced with $\phi(z) = 1 \Leftrightarrow f(z) = 0$.

\end{definition}

\begin{quote}
\rule{\linewidth}{1pt}
\begin{protocol}\label{PROT:BFL_sumcheck} ( Sumcheck Protocol ) \end{protocol}

Let $\phi(x_1, \ldots , x_m)$ be the 3-CNF formula which the prover $P$ is trying to show to be a tautology to a verifier $V$.
Let $\mathbb{F}$ be a field of sufficient size (of order at least $(3c+1)m$ will suffice where $c$ is the number of clauses of $\phi$).

\begin{enumerate}

\item $V$ takes $\phi$ and computes its arithmetization $f$ according to \cite{BFL90} Proposition 3.1 and sends it to $P$.

\item $V$ and $P$ agree on a set $I \subset \mathbb{F}$ of size at least $2dm$ where $d$ is the degree of $f$.

\item $V$ assigns $b_0 = 0$, which is supposed to be equal to the sum $$\sum_{x_1 = 0}^1 \ldots \sum_{x_m = 0}^1 f(x_1, \ldots , x_m)^2 = 0$$

\item $i \leftarrow 1$.

\item $P$ sends the coefficients of the univariate polynomial in $x$, $$g_i(x) = h(r_1, \ldots , r_{i-1}, x) = \sum_{x_{i+1}=0}^1 \ldots \sum_{x_{m}=0}^1 f(r_1, \ldots, r_{i-1}, x, x_{i+1}, \ldots, x_m)^2$$

\item $V$ checks whether $b_{i-1} = g_i(0) + g_i(1)$. If not, abort.

\item $V$ chooses a random $r_i \in I$, computes $b_i = g_i(r_i)$ and sends $r_i$ to $P$.

\item If $i \leq m$ then $i \leftarrow i+1$ and go to step 4.

\item $V$ checks whether $b_m = f(r_1, \ldots , r_m)^2$.

\end{enumerate}

\rule{\linewidth}{1pt}
\end{quote}


%
\begin{quote}
\rule{\linewidth}{1pt}
\begin{protocol}\label{PROT:BFL_appendix} ( Babai, Fortnow and Lund's MIP for Oracle-3-SAT ) \end{protocol}

Given $(B, r, s)$ as common input.

\begin{enumerate}


\item (sumcheck with oracle) $V$ and $P_1$ execute protocol \ref{PROT:BFL_sumcheck}. Let $( Q_{k+1},Q_{k+2},Q_{k+3} ) = ( r_{r+1}...r_{r+s},r_{r+s+1}...r_{r+2s},r_{r+2s+1}...r_{r+3s} ) \in (I^{s})^{3}$ be $V$'s questions during this phase.

\item (multilinearity test) $V$ asks $P_1$ to simulate an oracle storing the function $A$. $V$ queries $P_1$ with random, linearly related values in $I^{s}$.
If any response does not satisfy linearity, abort protocol. Let $Q_1,\ldots,Q_k \in I^{s}$ be $V$'s questions during this phase.

\item (non-adaptiveness test) $V$ chooses uniformly at random an $i$ such that $1 \leq i \leq k+3$ and asks $Q_i$ to $P_2$. If $P_2$'s answer differs from that of $P_1$, reject. Otherwise accept.

\end{enumerate}

\rule{\linewidth}{1pt}
\end{quote}

\section{Non-Locality -- an introduction}\label{AppB}
In this section we solely focus on the two-party single-round games and strategies that are sufficient
to discuss and analyze most of the MIPs.
Definitions and proofs for complete generalizations to multi-party multi-round games and strategies
will appear in a forthcoming paper with co-author Adel Magra.

\subsubsection{Games:}

Let $V$ be a predicate on $A\times B\times X\times Y$ (for some finite sets $A, B, X,$ and $Y$) and let $\pi$ be a probability distribution on $A\times B$.
Then $V$ and $\pi$ define a (single-round) game $G$ as follows: A pair of questions $(a,b)$ is randomly chosen according to distribution $\pi$, and $a\in A$ is sent to Alice
and $b\in B$ is sent to Bob. Alice must respond with an answer $x\in X$ and Bob with an answer $y\in Y$.
Alice and Bob win if $V$ evaluates to 1 on $(a,b,x,y)$ and lose otherwise.

\subsubsection{Strategies: Two-Party Channels}

A strategy for Alice and Bob is simply a probability distribution $P_{(x,y | a,b)}$ describing exactly how they will answer $(x,y)$ on every
pair of questions $(a,b)$. We now breakdown the set of all possible strategies for Alice and Bob according to their {\em non-locality}.

\subsubsection{Deterministic and Local Strategies:} A strategy $P_{(x,y | a,b)}$ is {\em deterministic} if there exists functions $f_{A}:A\rightarrow X, f_{B}:B\rightarrow Y$ such that
$$P_{(x,y | a,b)} = 
\begin{cases}
    1  & \text{if $x=f_{A}(a)$ and $y=f_{B}(b)$} \\
    0  & \text{otherwise}
\end{cases}.$$
A deterministic strategy corresponds to the situation where Alice and Bob agree on their individual actions before any knowledge of the values $a,b$ is provided to them.
In this case they use only their own input to determine their individual output.


A strategy $P_{(x,y | a,b)}$ is {\em local} if there exists a finite set $R$ and functions $f_{A}:A\times R\rightarrow X, f_{B}:B\times R\rightarrow Y$ such that
$$P_{(x,y | a,b)} = \frac{ |\{ r\in R : x=f_{A}(a,r) \text{ and } y=f_{B}(b,r) | }{ |R|}.$$
A local strategy corresponds to the situation where Alice and Bob agree on a deterministic strategy selected uniformly among $|R|$ such possibilities.
The choice $r$ of Alice and Bob's strategy, and the choice of inputs $(a,b)$ provided to Alice and Bob are generally agreed to be statistically independent random variables.

\subsection{Local Reducibility}
We now turn to the notion of locally reducing a strategy to another, that is how Alice and Bob limited to local strategies but equipped
with a particular (not necessarily local) strategy $U'$ are able to achieve another particular (not necessarily local) strategy $U$. For this purpose
we introduce a notion of distance between strategies in order to analyze strategies that are approaching each other asymptotically.

\subsubsection{Distances between Strategies:} Several distances could be selected here as long as their meaning as it approaches zero are the same. In the definitions below,
$U,U'$ are strategies and ${\mathcal U'}$ is a finite set of strategies. 

\begin{definition}
$ |U,U'| = \displaystyle{ \sum_{a,b,x,y} } | P_{U}(x,y | a,b) - P_{U'}(x,y | a,b) | $
\end{definition}

\begin{definition}
$ |U,{\mathcal U'}| =\displaystyle{ \min_{U'\in {\mathcal U'} } } |U,U'| $
\end{definition}

\subsubsection{Local extensions of Strategies:} For natural integer $n$, we define the set $\operatorname{LOC}^{n}(U)$ of strategies that are local extensions (of order $n$)
of $U$ to be all the strategies Alice and Bob can achieve using local strategies where strategy $U$ may be used up to $n$ times as sub-routine
calls\footnote{Done by selecting functions $f_{A}^{0}:A\times R\rightarrow A, ~f_{A}^{1}:A\times X \times R\rightarrow A,..., ~f_{A}^{n-1}:A \times X^{n-1}\times R\rightarrow A$, $ ~f_{A}^{n}:A \times X^{n}\times R\rightarrow X$ to determine the input of each sub-routine from input $a$ and previous outputs.}. If we restrict all the functions used to be polynomial-time computable we analogously define $\underset{\text{\small{poly}}}{\operatorname{LOC}}^{n}(U)$.

\begin{definition}
$U'$ Locally (poly-)Reduces to $U$ ($U' \leq_{\underset{\text{\tiny{(poly)}}}{\mbox{\scriptsize\em LOC}}} U\!$) iff $\displaystyle{\lim_{n\rightarrow \infty}} | U' , \underset{\text{\small{(poly)}}}{\mbox{\em LOC}}^{n}(U) | =0$.
\end{definition}

\begin{definition}
$U'$ is Locally (poly-)Equivalent to $U$ ($U' =_{\underset{\text{\tiny{(poly)}}}{\mbox{\scriptsize\em LOC}}} U\!$) iff
$U' \leq_{\underset{\text{\tiny{(poly)}}}{\mbox{\scriptsize\em LOC}}} U \leq_{\underset{\text{\tiny{(poly)}}}{\mbox{\scriptsize\em LOC}}} U'.$
\end{definition}

\subsubsection{Non-Adaptive extensions of Strategies:} For natural integer $n$, we define the set $\operatorname{NAD}^{n}(U)$ of strategies that are Non-Adaptive extensions (of order $n$)
of $U$ to be all the strategies Alice and Bob can achieve using Non-Adaptive strategies where strategy $U$ may be used up to $n$ times as sub-routine
calls\footnote{Done by selecting functions $f_{A}^{0}:A\times R\rightarrow A, ~f_{A}^{1}:A \times R\rightarrow A,..., ~f_{A}^{n-1}:A \times R\rightarrow A$,\\ $ ~f_{A}^{n}:A \times X^{n}\times R\rightarrow X$ to determine the input of each sub-routine from input $a$ only.}. If we restrict the functions used to be poly-time computable we get $\underset{\text{\small{poly}}}{\operatorname{NAD}}^{n}(U)$.

\begin{definition}
$U'$ Non-Adaptively (poly-)Reduces to $U$ ($U' \leq_{\underset{\text{\tiny{(poly)}}}{\mbox{\scriptsize\em NAD}}} U\!$) iff $\displaystyle{\lim_{n\rightarrow \infty}} | U' , \underset{\text{\small{(poly)}}}{\mbox{\em NAD}}^{n}(U) | =0$.
\end{definition}

\begin{definition}
$U'$ is Non-Adaptively (poly-)Equivalent to $U$ ($U' =_{\underset{\text{\tiny{(poly)}}}{\mbox{\scriptsize\em NAD}}} U\!$) iff 
$U' \leq_{\underset{\text{\tiny{(poly)}}}{\mbox{\scriptsize\em NAD}}} U \leq_{\underset{\text{\tiny{(poly)}}}{\mbox{\scriptsize\em NAD}}} U'.$
\end{definition}

In general, Non-Adaptive reducibility is a weaker notion than local reducibility. However, for certain distributions $\mathbf{U}$ it may result that $\{ D | D \leq_{\underset{\text{\tiny{(poly)}}}{\mbox{\scriptsize\em LOC}}} \mathbf{U} \}
= \{ D | D \leq_{\underset{\text{\tiny{(poly)}}}{\mbox{\scriptsize\em NAD}}} \mathbf{U} \}$ as follows.

\subsection{Locality}
We now define the lowest of the non-locality classes ${\mathbb{LOC}}$. We could define it directly from the notion of local strategies as defined above, but for analogy with the other classes
we later define, ${\mathbb{LOC}}$ is defined as all those strategies locally reducible to a {\em complete} strategy we call $\mathbf{ID}$ (see {\bf Fig.~\ref{ID}}). Of course, any strategy is complete for this class.
\begin{figure}[h!]

\centering
  \mbox{\Qcircuit @C=1em @R=.7em {
      \lstick{a}  \ar[r] & \multigate{1}{\mathbf{ID}} & \rstick{b} \ar[l] \\
      \lstick{a}   & \ghost{\mathbf{ID}} \ar[r]\ar[l] & \rstick{b}
    }}
\caption{an $\mathbf{ID}$-box}
  \label{ID}
\end{figure}

\begin{definition}
${\mathbb{LOC}} = \{ U | U \leq_{\mbox{\scriptsize\em LOC}} \mathbf{ID} \}$ and $\underset{{{poly}}}{\mathbb{LOC}} = \{ U | U \leq_{\underset{\text{\tiny{poly}}}{\mbox{\scriptsize\em LOC}}} \mathbf{ID} \}$
\end{definition}

Note: ${\mathbb{LOC}}$ is the class of strategies that John Bell \cite{BELL64} considered as classical hidden-variable theories that he compared to entanglement.
It is also the class of strategies that BenOr, Goldwasser, Kilian and Wigderson \cite{BGKW88} chose to define classical Provers in Multi-Provers Interactive Proof Systems.
 ${\mathbb{LOC}}$ is also those strategies Non-Adaptively reducible to $\mathbf{ID}$

\begin{definition}
Alternatively, ${\mathbb{LOC}} = \{ U | U \leq_{\mbox{\scriptsize\em NAD}} \mathbf{ID} \}$ and $\underset{{{poly}}}{\mathbb{LOC}} = \{ U | U \leq_{\underset{\text{\tiny{poly}}}{\mbox{\scriptsize\em NAD}}} \mathbf{ID} \}$
\end{definition}

Alternatively, we can also define $\mathbb{LOC}$ from an empty box as used in the core of this paper
\begin{figure}[h!]

\centering
  \mbox{\Qcircuit @C=1em @R=.7em {
      \lstick{a}  \ar[r] & \multigate{1}{\mathbf{\emptyset}} & \rstick{b} \ar[l] \\
      \lstick{x}   & \ghost{\mathbf{\emptyset}} \ar[r]\ar[l] & \rstick{y}
    }}
\caption{an $\mathbf{\emptyset}$-box where $x\in X$ and $y\in Y$ are uniform and independent of everything else}
  \label{PHI}
\end{figure}
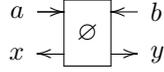

\begin{definition}
Alternatively, ${\mathbb{LOC}} = \{ U | U \leq_{\mbox{\scriptsize\em NAD}} \mathbf{\emptyset} \} = \{ U | U \leq_{\mbox{\scriptsize\em LOC}} \mathbf{\emptyset} \} $
\end{definition}

\subsection{One-Way Signalling}
We now turn to One-Way Signalling which allows communication from one side to the other. We name the directions arbitrarily Left and Right.
We define ${\mathbf R} \text{-} {\mathbb{SIG}}$ (resp. ${\mathbf L} \text{-} {\mathbb{SIG}}$) as all those strategies locally reducible to a {\em complete} strategy we call ${\mathbf R} \text{-} {\mathbf{SIG}}$ (see {\bf Fig.~\ref{rSIG}})
(resp. ${\mathbf L} \text{-} {\mathbf{SIG}}$ (see {\bf Fig.~\ref{lSIG}})). These classes are useful to define what it means for a strategy to {\em signal} as well as the notion of {\em No-Signalling} strategies.

\begin{figure}[h!]

\centering
  \mbox{\Qcircuit @C=1em @R=.7em {
      \lstick{a}  \ar[r] & \multigate{1}{ {\mathbf R} \text{-} {\mathbf{SIG}} } & \rstick{b} \ar[l] \\
      \lstick{a}   & \ghost{ {\mathbf R} \text{-} {\mathbf{SIG}} } \ar[r]\ar[l] & \rstick{a}
    }}
\caption{an $\mathbf{R} \text{-} {\mathbf{SIG}}$-box}
  \label{rSIG}
\end{figure}

\begin{definition}
${\mathbf R} \text{-} {\mathbb{SIG}} = \{ U | U \leq_{\mbox{\scriptsize\em LOC}} {\mathbf R} \text{-} {\mathbf{SIG}} \}$ and ${\mathbf R} \text{-} \underset{{{poly}}}{\mathbb{SIG}} = \{ U | U \leq_{\underset{\text{\tiny{poly}}}{\mbox{\scriptsize\em LOC}}} {\mathbf R} \text{-} {\mathbf{SIG}} \}$
\end{definition}

\begin{definition}
We say that $U$ Right Signals (is ${\mathbf R} \text{-} {\mathbb{SIG}}$-verbose\footnote{We define the notion of $\mathbb{L}$-verbose in analogy to $\mathbb{NP}$-hard: it means ``as verbose as any distribution in non-locality class
$\mathbb{L}$''. In consequence, a distribution $U$ is $\mathbb{L}$-complete if $U \in \mathbb{L}$ and $U$ is $\mathbb{L}$-verbose.}) iff ${\mathbf R} \text{-} {\mathbf{SIG}} \leq_{\mbox{\scriptsize\em LOC}} U$.
\end{definition}

\begin{figure}[h!]

\centering
  \mbox{\Qcircuit @C=1em @R=.7em {
      \lstick{a}  \ar[r] & \multigate{1}{{\mathbf L} \text{-} {\mathbf{SIG}}} & \rstick{b} \ar[l] \\
      \lstick{b}   & \ghost{{\mathbf L} \text{-} {\mathbf{SIG}}} \ar[r]\ar[l] & \rstick{b}
    }}
\caption{an ${\mathbf L} \text{-} {\mathbf{SIG}}$-box}
  \label{lSIG}
\end{figure}

\begin{definition}
${\mathbf L} \text{-} {\mathbb{SIG}} = \{ U | U \leq_{\mbox{\scriptsize\em LOC}} {\mathbf L} \text{-} {\mathbf{SIG}} \}$ and ${\mathbf L} \text{-} \underset{{{poly}}}{\mathbb{SIG}} = \{ U | U \leq_{\underset{\text{\tiny{poly}}}{\mbox{\scriptsize\em LOC}}} {\mathbf L} \text{-} {\mathbf{SIG}} \}$

\end{definition}

\begin{definition}
We say that $U$ Left Signals (is ${\mathbf L} \text{-} {\mathbb{SIG}}$-verbose) iff ${\mathbf L} \text{-} {\mathbf{SIG}} \leq_{\mbox{\scriptsize\em LOC}} U$.
\end{definition}

\begin{definition}
We say that $U$ Signals iff $U$ Right Signals or Left Signals.
\end{definition}

We prove a first result that is intuitively obvious. We show that the complete strategy ${\mathbf R} \text{-} {\mathbf{SIG}}$ cannot be approximated in ${\mathbf L} \text{-} {\mathbb{SIG}}$
and the other way around.

\begin{theorem}\label{LR-impossible}
${\mathbf R} \text{-} {\mathbf{SIG}} \not\in {\mathbf L} \text{-} {\mathbb{SIG}}$ and
${\mathbf L} \text{-} {\mathbf{SIG}} \not\in {\mathbf R} \text{-} {\mathbb{SIG}}$.
\end{theorem}

\begin{proof}
Follows from a simple capacity argument. For all $n$, all the channels in $\mbox{LOC}^{n}({\mathbf R} \text{-} {\mathbf{SIG}})$ have zero left-capacity, while ${\mathbf L} \text{-} {\mathbf{SIG}}$ has non-zero left-capacity. And vice-versa.
\end{proof}

\subsection{Signalling}
We are now ready to define the largest of the non-locality classes ${\mathbb{SIG}}$. Indeed every possible strategy is in ${\mathbb{SIG}}$.

\begin{definition}

${\mathbb{SIG}} = \{ U | U \leq_{\mbox{\scriptsize\em LOC}} \mathbf{SIG}\}$ and $\underset{{{poly}}}{\mathbb{SIG}} = \{ U | U \leq_{\underset{\text{\tiny{poly}}}{\mbox{\scriptsize\em LOC}}} {\mathbf{SIG}} \}$

\end{definition}

\begin{figure}[h!]

\centering
  \mbox{\Qcircuit @C=1em @R=.7em {
      \lstick{a}  \ar[r] & \multigate{1}{{\mathbf{SIG}}} & \rstick{b} \ar[l] \\
      \lstick{b}   & \ghost{{\mathbf{SIG}}} \ar[r]\ar[l] & \rstick{a}
    }}
\caption{a ${\mathbf{SIG}}$-box}
  \label{SIG}
\end{figure}

\begin{definition}
We say that $U$ Fully Signals (is ${\mathbb{SIG}}$-verbose) iff $U$ Right Signals and Left Signals.
\end{definition}

\begin{figure}[hbt]
\begin{center}
\fbox{
\includegraphics[width=1.0\textwidth]{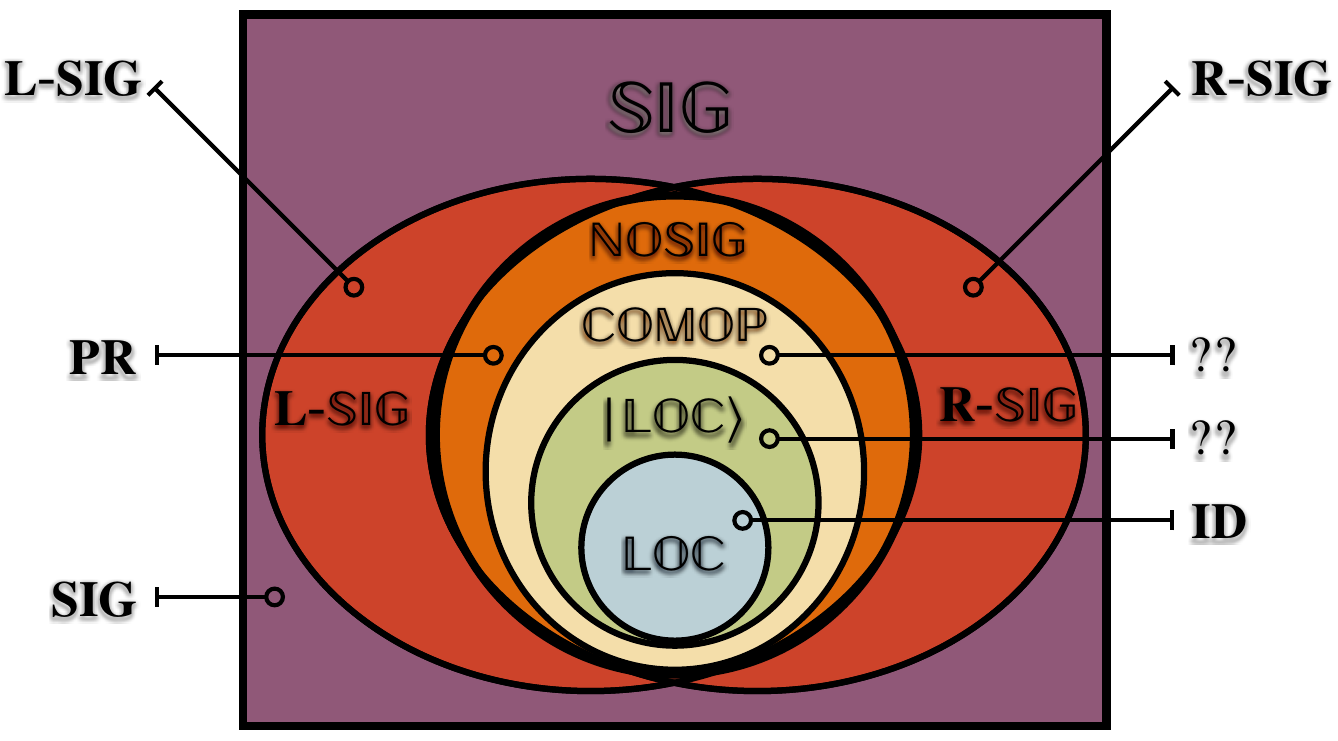}
}
\end{center}
\caption{\label{HIER} Non-locality Hierarchy and complete (two-party) distributions in each class.}
\end{figure}

\subsection{No-Signalling}
We finally define the less intuitive non-locality class ${\mathbb{NOSIG}}$ in relation to classes defined above. 

\begin{definition}
$\mathbb{NOSIG} = {\mathbf R} \text{-} {\mathbb{SIG}} \bigcap {\mathbf L} \text{-} {\mathbb{SIG}}$ and $\underset{{{poly}}}{\mathbb{NOSIG}} = {\mathbf R} \text{-} \underset{{{poly}}}{\mathbb{SIG}} \bigcap {\mathbf L} \text{-} \underset{{{poly}}}{\mathbb{SIG}}$.
\end{definition}

A similar characterization may be found in \cite{Acin2015} Section 3 and \cite{Barnum05} Corollary 3.5.

\begin{theorem} \label{NSig}.
The above definition of $\mathbb{NOSIG}$ exactly coincides with the {\em traditional} notion of No-Signalling \cite{BLM+05}.
\end{theorem}

Intuitively, a distribution $P(x,y | a,b)$ is No-Signalling as long as for every $a$ the $x|b$ and for every $b$ the $y|a$  channels have zero capacity.
 
Note: Forster and Wolf \cite{PhysRevA.84.042112} have proved that {\bf PR} (see {\bf Fig.~\ref{nlbox}}) is complete for $\mathbb{NOSIG}$ distributions under an asymptotic definition similar to ours.

{\bf Fig.~\ref{HIER}} shows the relation of these classes as well as the case obtained via quantum entanglement (${\mathbb{|LOC\rangle }}$) as considered by Bell \cite{BELL64} and via commuting-operators (${\mathbb{COMOP}}$) as defined by 
Ito, 
Kobayashi, 
Preda, 
Sun, and 
Yao \cite{DBLP:conf/coco/ItoKPSY08}. We include those for completeness but will not discuss these particular classes any further in this work.

\begin{definition}
We say that $U$ does not Signal iff $U$ does not Right Signal nor Left Signal iff $U \in \mathbb{NOSIG}$.
\end{definition}

\section{Visual description of the new model}

\subsection{Local Multi-Prover Interactive Proofs}

%
%

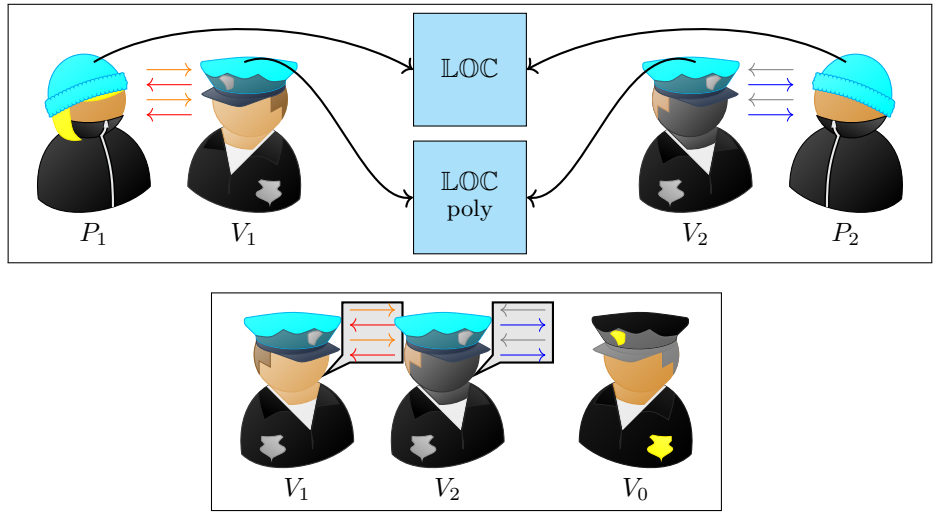
\begin{figure}[htb]
\begin{center}
\fbox{
\begin{tikzpicture}
\node[draw,rectangle,color=black,fill=cyan!30,minimum size=1.5cm] (U) at (5,-0.7) { $\underset{\text{\small{poly}}}{\mathbb {LOC}}$ };
\node[draw,rectangle,color=black,fill=cyan!30,minimum size=1.5cm] (V) at (5,1) { ${\mathbb {LOC}}$ };
\node[criminal,saturated,female,shirt=black,hat=cyan,hair=yellow,minimum size=1.5cm] (A) at (0,0.1) {$P_{1}$};
\node[police,mirrored,shirt=black,hat=cyan,hatbadge=gray,badge=gray,minimum size=1.5cm] (B) at (2,0.1) {$V_{1}$};
\node[police,shirt=black,hat=cyan,skin=black,hair=brown,hatbadge=gray,badge=gray,minimum size=1.5cm] (C) at (8,0.1) {$V_{2}$};
\node[criminal,saturated,mirrored,shirt=black,hat=cyan,minimum size=1.5cm] (D) at (10,0.1) {$P_{2}$};
\draw[orange, ->] (0.7,1) -- (1.3,1);
\draw[red, <-] (0.7,0.8) -- (1.3,0.8);
\draw[orange, ->] (0.7,0.6) -- (1.3,0.6);
\draw[red, <-] (0.7,0.4) -- (1.3,0.4);
\draw[gray, ->] (9.3,1) -- (8.7,1);
\draw[blue, <-] (9.3,0.8) -- (8.7,0.8);
\draw[gray, ->] (9.3,0.6) -- (8.7,0.6);
\draw[blue, <-] (9.3,0.4) -- (8.7,0.4);
\draw[black, thick, ->] (A.north) .. controls (1,1.7)  and (3,1.7)  .. (V.west);
\draw[black, thick, ->] (D.north) .. controls (9,1.7)  and (7,1.7)  .. (V.east);
\draw[black, thick, ->] (B.north)  .. controls (3,1.5) and (3.5,-0.8) .. (U.west);
\draw[black, thick, ->] (C.north)  .. controls (7,1.5) and (6.5,-0.8) ..  (U.east);
\end{tikzpicture}
}
\end{center}
%
\begin{center}
\fbox{
\begin{tikzpicture}
\node[draw,rectangle,color=white,minimum size=1.5cm] at (5,-0.7) {};
\node[police,shirt=black,hat=cyan,badge=gray,hatbadge=gray,minimum size=1.5cm] (A) at (0.5,0) {$V_{1}$};
\draw[black,fill=black!10, thick] (1.1,0.5) -- (1.1,1.2) -- (1.9,1.2) -- (1.9,0.4) -- (1.2,0.4) -- (A.mouth) -- cycle;
\draw[orange, ->] (1.2,1.1) -- (1.8,1.1);
\draw[red, <-] (1.2,0.9) -- (1.8,0.9);
\draw[orange, ->] (1.2,0.7) -- (1.8,0.7);
\draw[red, <-] (1.2,0.5) -- (1.8,0.5);
\node[police,saturated,shirt=black,hat=black,hatshield=gray,mirrored,hair=gray,hatbadge=yellow,badge=yellow,minimum size=1.5cm] (B) at (5,0) {$V_{0}$};
\node[police,shirt=black,hat=cyan,skin=black,hair=brown,hatbadge=gray,badge=gray,minimum size=1.5cm] (C) at (2.5,0) {$V_{2}$};
\draw[black,fill=black!10, thick] (3.1,0.5) -- (3.1,1.2) -- (3.9,1.2) -- (3.9,0.4) -- (3.2,0.4) -- (C.mouth) -- cycle;
\draw[gray, ->] (3.8,1.1) -- (3.2,1.1);
\draw[blue, <-] (3.8,0.9) -- (3.2,0.9);
\draw[gray, ->] (3.8,0.7) -- (3.2,0.7);
\draw[blue, <-] (3.8,0.5) -- (3.2,0.5);
\end{tikzpicture}
}
\end{center}
\caption{\label{inter} Interrogation phase (top) followed by decision phase (bottom).}
\end{figure}

In the Interrogation phase (see {\bf Fig. \ref{inter}}) $V_{1},...,V_{k}$ (equipped with an arbitrary local correlator) individually interrogate $P_{1},...,P_{k}$ (equipped with an arbitrary local correlator).
At the end of the interactive part, all the $V_{1},...,V_{k}$ report to $V_{0}$ who takes the final decision. The corresponding complexity class is $\mathbf{MIP} = \mathbf{MIP}^{{\mathbb {LOC}}}_{\underset{\text{\tiny{poly}}}{\mathbb {LOC}}} = \mathbf{NEXP}$.

%

\subsection{Entangled Multi-Prover Interactive Proofs}
\label{ent-def}

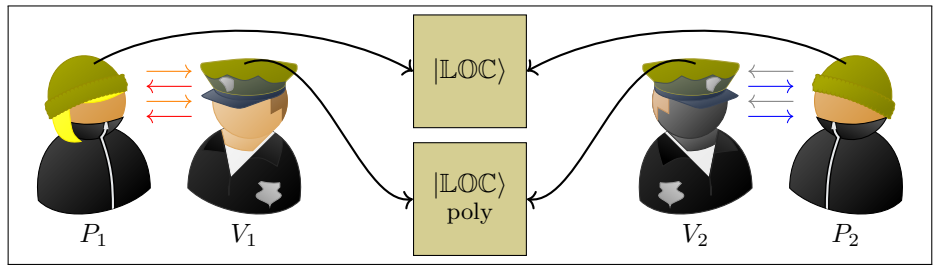
\begin{figure}[htb]
\begin{center}
\fbox{
\begin{tikzpicture}
\node[draw,rectangle,color=black,fill=olive!40,minimum size=1.5cm] (U) at (5,-0.7) {$\underset{\text{\small{poly}}}{|{\mathbb {LOC}}\rangle}$};
\node[draw,rectangle,color=black,fill=olive!40,minimum size=1.5cm] (V) at (5,1) {$|{\mathbb {LOC}}\rangle$};
\node[criminal,saturated,female,shirt=black,hat=olive,hair=yellow,minimum size=1.5cm] (A) at (0,0.1) {$P_{1}$};
\node[police,mirrored,shirt=black,hat=olive,hatbadge=gray,badge=gray,minimum size=1.5cm] (B) at (2,0.1) {$V_{1}$};
\node[police,shirt=black,hat=olive,skin=black,hair=brown,hatbadge=gray,badge=gray,minimum size=1.5cm] (C) at (8,0.1) {$V_{2}$};
\node[criminal,saturated,mirrored,shirt=black,hat=olive,minimum size=1.5cm] (D) at (10,0.1) {$P_{2}$};
\draw[orange, ->] (0.7,1) -- (1.3,1);
\draw[red, <-] (0.7,0.8) -- (1.3,0.8);
\draw[orange, ->] (0.7,0.6) -- (1.3,0.6);
\draw[red, <-] (0.7,0.4) -- (1.3,0.4);
\draw[gray, ->] (9.3,1) -- (8.7,1);
\draw[blue, <-] (9.3,0.8) -- (8.7,0.8);
\draw[gray, ->] (9.3,0.6) -- (8.7,0.6);
\draw[blue, <-] (9.3,0.4) -- (8.7,0.4);
\draw[black, thick, ->] (A.north) .. controls (1,1.7)  and (3,1.7)  .. (V.west);
\draw[black, thick, ->] (D.north) .. controls (9,1.7)  and (7,1.7)  .. (V.east);
\draw[black, thick, ->] (B.north)  .. controls (3,1.5) and (3.5,-0.8) .. (U.west);
\draw[black, thick, ->] (C.north)  .. controls (7,1.5) and (6.5,-0.8) ..  (U.east);
\end{tikzpicture}
}
\end{center}
\caption{\label{interQ} Interrogation phase.}
\end{figure}

In the Interrogation phase (see {\bf Fig. \ref{interQ}}) $V_{1},...,V_{k}$ (equipped with an arbitrary entangled correlator) individually interrogate $P_{1},...,P_{k}$ (equipped with an arbitrary entangled correlator).
At the end of the interactive part, all the $V_{1},...,V_{k}$ report to $V_{0}$ who takes the final decision.
The corresponding complexity class is $\mathbf{MIP}^{*} = \mathbf{MIP}^{{\mathbb{| LOC \rangle }}}_{\underset{\text{\tiny{poly}}}{\mathbb {| LOC \rangle }}} \supseteq \mathbf{NEXP}$.

\subsection{No-Signalling Multi-Prover Interactive Proofs}
\label{NS-def}

%

\begin{figure}[htb]
\begin{center}
\fbox{
\begin{tikzpicture}
\node[draw,rectangle,color=black,fill=orange!90,minimum size=1.5cm] (U) at (5,-0.7) {$\underset{\text{\small{poly}}}{\mathbb {NOSIG}}$};
\node[draw,rectangle,color=black,fill=orange!90,minimum size=1.5cm] (V) at (5,1) {${\mathbb {NOSIG}}$};
\node[criminal,saturated,female,shirt=black,hat=orange,hair=yellow,minimum size=1.5cm] (A) at (0,0.1) {$P_{1}$};
\node[police,mirrored,shirt=black,hat=orange,hatbadge=gray,badge=gray,minimum size=1.5cm] (B) at (2,0.1) {$V_{1}$};
\node[police,shirt=black,hat=orange,skin=black,hair=brown,hatbadge=gray,badge=gray,minimum size=1.5cm] (C) at (8,0.1) {$V_{2}$};
\node[criminal,saturated,mirrored,shirt=black,hat=orange,minimum size=1.5cm] (D) at (10,0.1) {$P_{2}$};
\draw[orange, ->] (0.7,1) -- (1.3,1);
\draw[red, <-] (0.7,0.8) -- (1.3,0.8);
\draw[orange, ->] (0.7,0.6) -- (1.3,0.6);
\draw[red, <-] (0.7,0.4) -- (1.3,0.4);
\draw[gray, ->] (9.3,1) -- (8.7,1);
\draw[blue, <-] (9.3,0.8) -- (8.7,0.8);
\draw[gray, ->] (9.3,0.6) -- (8.7,0.6);
\draw[blue, <-] (9.3,0.4) -- (8.7,0.4);
\draw[black, thick, <->] (A.north) .. controls (1,1.7)  and (3,1.7)  .. (V.west);
\draw[black, thick, <->] (D.north) .. controls (9,1.7)  and (7,1.7)  .. (V.east);
\draw[black, thick, <->] (B.north)  .. controls (3,1.5) and (3.5,-0.8) .. (U.west);
\draw[black, thick, <->] (C.north)  .. controls (7,1.5) and (6.5,-0.8) ..  (U.east);
\end{tikzpicture}
}
\end{center}
\caption{\label{NoSIG} Interrogation phase.}
\end{figure}

In the Interrogation phase (see {\bf Fig. \ref{NoSIG}}) $V_{1},...,V_{k}$ (equipped with an arbitrary No-Signalling correlator) individually interrogate $P_{1},...,P_{k}$ (equipped with an arbitrary No-Signalling correlator).
At the end of the interactive part, all the $V_{1},...,V_{k}$ report to $V_{0}$ who takes the final decision.
The corresponding complexity class is $\mathbf{MIP}^{ns} = \mathbf{MIP}^{{\mathbb {NOSIG}}}_{\underset{\text{\tiny{poly}}}{\mathbb {NOSIG}}}  = \mathbf{EXP}$.

As noted before, most MIPs found in the literature are actually (non-adaptive) local-verifier MIPs (see {\bf Fig. \ref{NoSIG-LOCAL}}) yielding for instance
$\mathbf{MIP}^{ns} = \mathbf{MIP}^{{\mathbb {NOSIG}}}_{\underset{\text{\tiny{poly}}}{\mathbb {LOC}}}$.

\begin{figure}[htb]
\begin{center}
\fbox{
\begin{tikzpicture}
\node[draw,rectangle,color=black,fill=cyan!30,text=black,minimum size=1.5cm] (U) at (5,-0.7) {$\underset{\text{\small{poly}}}{\mathbb {LOC}}$};
\node[draw,rectangle,color=black,fill=orange,minimum size=1.5cm] (V) at (5,1) {${\mathbb {NOSIG}}$};
\node[criminal,saturated,female,shirt=black,hat=orange,hair=yellow,minimum size=1.5cm] (A) at (0,0.1) {$P_{1}$};
\node[police,mirrored,shirt=black,hat=cyan,hatbadge=gray,badge=gray,minimum size=1.5cm] (B) at (2,0.1) {$V_{1}$};
\node[police,shirt=black,hat=cyan,skin=black,hair=brown,hatbadge=gray,badge=gray,minimum size=1.5cm] (C) at (8,0.1) {$V_{2}$};
\node[criminal,saturated,mirrored,shirt=black,hat=orange,minimum size=1.5cm] (D) at (10,0.1) {$P_{2}$};
\draw[orange, ->] (0.7,1) -- (1.3,1);
\draw[red, <-] (0.7,0.8) -- (1.3,0.8);
\draw[orange, ->] (0.7,0.6) -- (1.3,0.6);
\draw[red, <-] (0.7,0.4) -- (1.3,0.4);
\draw[gray, ->] (9.3,1) -- (8.7,1);
\draw[blue, <-] (9.3,0.8) -- (8.7,0.8);
\draw[gray, ->] (9.3,0.6) -- (8.7,0.6);
\draw[blue, <-] (9.3,0.4) -- (8.7,0.4);
\draw[black, thick, <->] (A.north) .. controls (1,1.7)  and (3,1.7)  .. (V.west);
\draw[black, thick, <->] (D.north) .. controls (9,1.7)  and (7,1.7)  .. (V.east);
\draw[black, thick, ->] (B.north)  .. controls (3,1.5) and (3.5,-0.8) .. (U.west);
\draw[black, thick, ->] (C.north)  .. controls (7,1.5) and (6.5,-0.8) ..  (U.east);
\end{tikzpicture}
}
\end{center}
\caption{\label{NoSIG-LOCAL} Interrogation phase.}
\end{figure}

\subsection{A New, Stronger Flavour of Zero-Knowledge}\label{sec:newzk}
Traditionally zero-knowledge is defined as a property of the honest provers for all (polynomial-time) verifiers
$$\forall_{\text{poly}} V^\prime~ \exists_{\text{poly}} S ~ \forall x\!\in\!L ~ \forall w ~~ {\mathbf{VIEW}}_{V^\prime}[ P_{1},...,P_{k},V^\prime ](w,x) = S(w,x). $$

However, in the present context, the fact that the simulation of $V^\prime$'s view via a single centralized simulator $S$, achieving zero-knowledge is rather easy
because such an $S$ can cheat the binding property of the commitments at will. The intuition behind the original definition is that the verifier is unable to convince a third party
(a Judge $J_{0}$) because the {\bf VIEW} he reports (see {\bf Fig.~\ref{dec}}) could have been equally created (with the same distribution) by a simulator.
Nevertheless, a stronger flavour of zero-knowledge is achieved if the simulator is not invoking
its full signalling power whenever the verifier does not use such power.

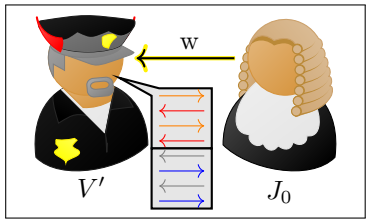
\begin{figure}[htb]
\begin{center}
\fbox{
\begin{tikzpicture}
\node[draw,rectangle,color=white,minimum size=1.5cm] at (5,-0.7) {};
\node[judge,saturated,mirrored,minimum size=1.5cm] (B) at (5,0) {$J_{0}$};
\node[police,evil,saturated,shirt=black,hat=black,hatshield=gray,hair=gray,hatbadge=yellow,badge=yellow,minimum size=1.5cm] (C) at (2.5,0) {$V^\prime$};
\draw[yellow, ultra thick, ->] (4.41, 0.5) -- (3.07,0.5);
\draw[black, thick, ->] (4.4,0.5) -- (3.1,0.5);
\filldraw[black] (3.8,0.5) circle (0pt) node[anchor=south] {w};
\draw[black,fill=black!10, thick] (3.3,0) -- (3.3,-0.7) -- (4.1,-0.7) -- (4.1,0.1) -- (3.4,0.1) -- (C.mouth) -- cycle;
\draw[orange, <-] (4.0,0) -- (3.4,0);
\draw[red, ->] (4.0,-0.2) -- (3.4,-0.2);
\draw[orange, <-] (4.0,-0.4) -- (3.4,-0.4);
\draw[red, ->] (4.0,-0.6) -- (3.4,-0.6);
\draw[black,fill=black!10, thick] (3.3,-0.7) -- (3.3,-1.5) -- (4.1,-1.5) -- (4.1,-0.7) -- cycle;
\draw[gray, ->] (4.0,-0.8) -- (3.4,-0.8);
\draw[blue, <-] (4.0,-1.0) -- (3.4,-1.0);
\draw[gray, ->] (4.0,-1.2) -- (3.4,-1.2);
\draw[blue, <-] (4.0,-1.4) -- (3.4,-1.4);

\end{tikzpicture}
} 
\end{center}
\caption{\label{dec} (Interac/Simula)tion-Distinction phase.}
\end{figure}

For all non-locality levels starting with $\widehat{{\mathbb {S}}}$ and up, the simulators $S_{i}$ do not need more non-local power than the verifiers $V^\prime_{i}$.
The ultimate (strongest) notion of ``$\underset{\text{\small{poly}}}{\mathbb {LOC}}$-local ZK'' being $\mathbf{ZK}^{\overset{\text{\tiny{poly}}}{\mathbb {LOC}}}$ because at all levels $V^\prime$ is simulated by a simulator with no extra
non-local power, whereas at the opposite end of the spectrum $\mathbf{ZK}^{\overset{\text{\tiny{poly}}}{\mathbb {SIG}}}$ is what is generally considered zero-knowledge with a single simulator or a group of signalling simulators.

This stronger notion of zero-knowledge is particularly interesting in the relativistic bit-commitment scenario where a pair of judges
may provide separate auxiliary-inputs to spatially separated verifiers pretending to be speaking to powerful provers. If the verifiers can report
their conversation fast enough to the judges (but not interact with the judges however), they must be able to do so without invoking signalling because of the distance separating them. If a pair of simulators
can produce the same distribution of views in the same context, we obtain a stronger flavour of zero-knowledge (See {\bf Fig.~\ref{inter2}}).

%

%
%


\begin{figure}[htb]
\begin{center}
\fbox{
\begin{tikzpicture}
\node[draw,rectangle,color=white,fill=white,minimum size=1.5cm] (U) at (5,-0.7) {$|{\mathbb {LOC}}\rangle$} ;
\node[draw,rectangle,color=black,fill=gray!40,minimum size=1.5cm] (V) at (5,1) { $\widehat{{\mathbb {V}}^{\prime}}$ };
\node[police,evil,shirt=black,hat=gray,hair=yellow,hatbadge=gray,badge=gray,minimum size=1.5cm] (A) at (0,0.1) {${V^{\prime}_{1}}$};
\node[police,mirrored,evil,shirt=black,hat=gray,skin=black,hair=brown,hatbadge=gray,badge=gray,minimum size=1.5cm] (D) at (10,0.1) {${V^{\prime}_{2}}$};
\node[judge,mirrored,female,hair=yellow,minimum size=1.5cm] (B) at (2.5,0.1) {$J_{1}$};
\node[judge,hair=gray,minimum size=1.5cm] (C) at (7.5,0.1) {$J_{2}$};
%
\draw[yellow, ultra thick, ->] (1.91, 0.5) -- (0.57,0.5);
\draw[black, thick, ->] (1.9,0.5) -- (0.6,0.5);
\filldraw[black] (1.3,0.5) circle (0pt) node[anchor=south] {$w_{1}$};
\draw[black,fill=black!10, thick] (0.8,0) -- (0.8,-0.7) -- (1.6,-0.7) -- (1.6,0.1) -- (0.9,0.1) -- (A.mouth) -- cycle;
\draw[orange, <-] (1.5,0) -- (0.9,0);
\draw[red, ->] (1.5,-0.2) -- (0.9,-0.2);
\draw[orange, <-] (1.5,-0.4) -- (0.9,-0.4);
\draw[red, ->] (1.5,-0.6) -- (0.9,-0.6);
\draw[yellow, ultra thick, ->] (8.09, 0.5) -- (9.43,0.5);
\draw[black, thick, ->] (8.1,0.5) -- (9.4,0.5);
\filldraw[black] (8.7,0.5) circle (0pt) node[anchor=south] {$w_{2}$};
\draw[black,fill=black!10, thick] (9.2,0) -- (9.2,-0.7) -- (8.4,-0.7) -- (8.4,0.1) -- (9.1,0.1) -- (D.mouth) -- cycle;
\draw[gray, <-] (8.5,0) -- (9.1,0);
\draw[blue, ->] (8.5,-0.2) -- (9.1,-0.2);
\draw[gray, <-] (8.5,-0.4) -- (9.1,-0.4);
\draw[blue, ->] (8.5,-0.6) -- (9.1,-0.6);
\draw[black, thick, ->] (A.north) .. controls (1,1.7)  and (3,1.7)  .. (V.west);
\draw[black, thick, ->] (D.north) .. controls (9,1.7)  and (7,1.7)  .. (V.east);
\end{tikzpicture}
}
\end{center}

\begin{center}
\fbox{
\begin{tikzpicture}
\node[draw,rectangle,color=white,fill=white,minimum size=1.5cm] (U) at (5,-0.7) {$|{\mathbb {LOC}}\rangle$} ;
\node[draw,rectangle,color=black,fill=gray!40,minimum size=1.5cm] (V) at (5,1) { $\widehat{{\mathbb {S}}}\bigcup \widehat{{\mathbb {V}}^{\prime}}$ };
\node[builder,evil,shirt=black,hat=gray,hair=yellow,hatbadge=gray,badge=gray,minimum size=1.5cm] (A) at (0,0.1) {${S}_{1}$};
\node[builder,evil,mirrored,shirt=black,hat=gray,skin=black,hair=brown,hatbadge=gray,badge=gray,minimum size=1.5cm] (D) at (10,0.1) {${S}_{2}$};
\node[judge,mirrored,female,hair=yellow,minimum size=1.5cm] (B) at (2.5,0.1) {$J_{1}$};
\node[judge,hair=gray,minimum size=1.5cm] (C) at (7.5,0.1) {$J_{2}$};
%
\draw[yellow, ultra thick, ->] (1.91, 0.5) -- (0.57,0.5);
\draw[black, thick, ->] (1.9,0.5) -- (0.6,0.5);
\filldraw[black] (1.3,0.5) circle (0pt) node[anchor=south] {$w_{1}$};
\draw[black,fill=black!10, thick] (0.8,0) -- (0.8,-0.7) -- (1.6,-0.7) -- (1.6,0.1) -- (0.9,0.1) -- (A.mouth) -- cycle;
\draw[orange, <-] (1.5,0) -- (0.9,0);
\draw[red, ->] (1.5,-0.2) -- (0.9,-0.2);
\draw[orange, <-] (1.5,-0.4) -- (0.9,-0.4);
\draw[red, ->] (1.5,-0.6) -- (0.9,-0.6);
\draw[yellow, ultra thick, ->] (8.09, 0.5) -- (9.43,0.5);
\draw[black, thick, ->] (8.1,0.5) -- (9.4,0.5);
\filldraw[black] (8.7,0.5) circle (0pt) node[anchor=south] {$w_{2}$};
\draw[black,fill=black!10, thick] (9.2,0) -- (9.2,-0.7) -- (8.4,-0.7) -- (8.4,0.1) -- (9.1,0.1) -- (D.mouth) -- cycle;
\draw[gray, <-] (8.5,0) -- (9.1,0);
\draw[blue, ->] (8.5,-0.2) -- (9.1,-0.2);
\draw[gray, <-] (8.5,-0.4) -- (9.1,-0.4);
\draw[blue, ->] (8.5,-0.6) -- (9.1,-0.6);
\draw[black, thick, ->] (A.north) .. controls (1,1.7)  and (3,1.7)  .. (V.west);
\draw[black, thick, ->] (D.north) .. controls (9,1.7)  and (7,1.7)  .. (V.east);
\end{tikzpicture}
}
\end{center}

\begin{center}
\fbox{
\begin{tikzpicture}
\node[draw,rectangle,color=white,minimum size=1.5cm] at (5,-0.7) {};
\node[judge,female,hair=yellow,minimum size=1.5cm] (A) at (0.5,0) {$J_{1}$};
\node[judge,hair=gray,minimum size=1.5cm] (C) at (2.5,0) {$J_{2}$};
\node[judge,saturated,mirrored,minimum size=1.5cm] (B) at (5,0) {$J_{0}$};
\draw[yellow, ultra thick, ->] (4.41, 0.5) -- (0.97,0.5);
\draw[black, thick, ->] (4.4,0.5) -- (1,0.5);
\draw[yellow, ultra thick, ->] (4.41, 0.5) -- (2.97,0.5);
\draw[black, thick, ->] (4.4,0.5) -- (3.0,0.5);
\filldraw[black] (3.7,0.5) circle (0pt) node[anchor=south] {$w_{1},w_{2}$};
\draw[black,fill=black!10, thick] (1.1,0) -- (1.1,-0.7) -- (1.9,-0.7) -- (1.9,0.1) -- (1.2,0.1) -- (A.mouth) -- cycle;
\draw[orange, <-] (1.8,0) -- (1.2,0);
\draw[red, ->] (1.8,-0.2) -- (1.2,-0.2);
\draw[orange, <-] (1.8,-0.4) -- (1.2,-0.4);
\draw[red, ->] (1.8,-0.6) -- (1.2,-0.6);
%
\draw[black,fill=black!10, thick] (3.1,0) -- (3.1,-0.7) -- (3.9,-0.7) -- (3.9,0.1) -- (3.2,0.1) -- (C.mouth) -- cycle;
\draw[gray, ->] (3.8,0) -- (3.2,0);
\draw[blue, <-] (3.8,-0.2) -- (3.2,-0.2);
\draw[gray, ->] (3.8,-0.4) -- (3.2,-0.4);
\draw[blue, <-] (3.8,-0.6) -- (3.2,-0.6);
\end{tikzpicture}
}
\end{center}
\caption{\label{inter2} Interrogation or Simulation phase (top) followed by Distinction phase (bottom).}
\end{figure}
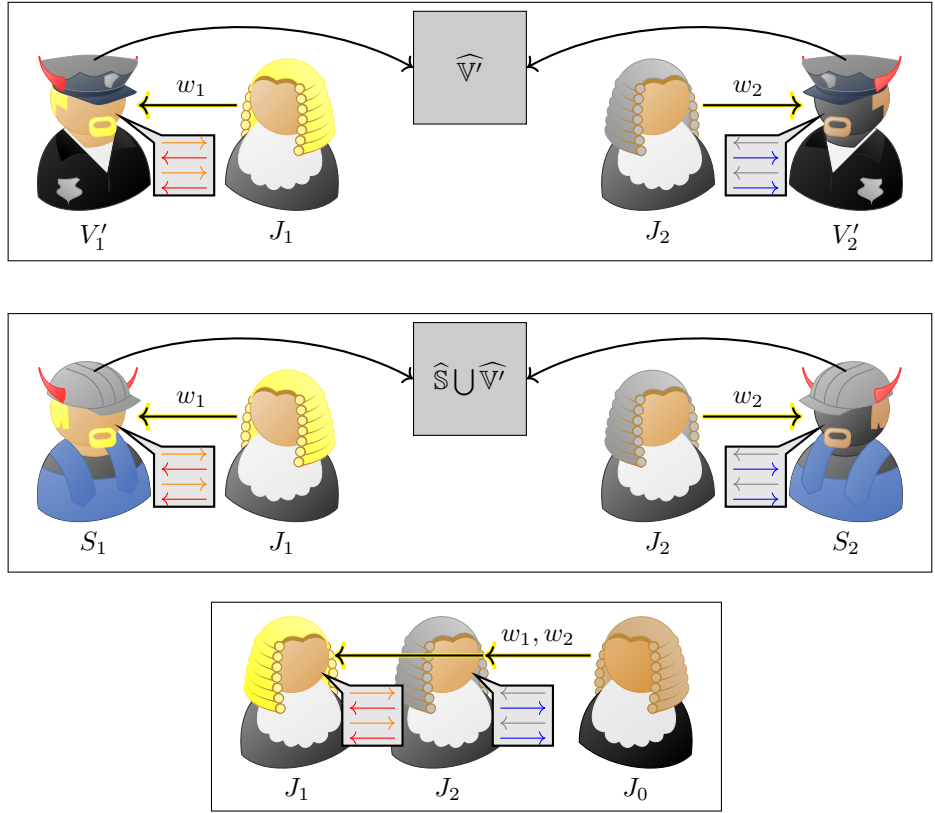

The results of this paper, depending on the specific bit commitment used, may be achieved under a stronger flavour of zero-knowledge
if a member of the non-locality class $\widehat{\mathbb {S}}$ is enough to break the binding property of the commitments. For instance, the result of section \ref{MIP_local}
is really $\mathbf{ZK}^{\overset{\text{\tiny{poly}}}{\mathbb {NOSIG}}}\!\mathbf{MIP}^{{\mathbb {LOC}}}_{\underset{\text{\tiny{poly}}}{\mathbb {LOC}}} = \mathbf{NEXP}$ although existing proofs usually mean
$\mathbf{ZK}^{\overset{\text{\tiny{poly}}}{\mathbb {SIG}}}\!\mathbf{MIP}^{{\mathbb {LOC}}}_{\underset{\text{\tiny{poly}}}{\mathbb {LOC}}} = \mathbf{NEXP}$.
Using the bit commitment scheme based on the magic square game of \cite{Crepeau2017} we can also obtain $\mathbf{ZK}^{\overset{\text{\tiny{poly}}}{\mathbb {|LOC\rangle}}}\!\mathbf{MIP}^{{\mathbb {LOC}}}_{\underset{\text{\tiny{poly}}}{\mathbb {LOC}}} = \mathbf{NEXP}$.

Some interesting questions resulting from this definition is whether any higher class such as
$\mathbf{ZK}^{\overset{\text{\tiny{poly}}}{\mathbb {LOC}}}\!\mathbf{MIP}^{{\mathbb {LOC}}}_{\underset{\text{\tiny{poly}}}{\mathbb {LOC}}}$ or 
$\mathbf{ZK}^{\overset{\text{\tiny{poly}}}{\mathbb {NOSIG}}}\!\mathbf{MIP}^{{\mathbb {NOSIG}}}_{\underset{\text{\tiny{poly}}}{\mathbb {NOSIG}}}$ contains more than the natural examples
such as GRAPH ISO or CODE EQUIV already found in the most natural class 
$\mathbf{ZK}^{\overset{\text{\tiny{poly}}}{\mathbb {SIG}}}\!\mathbf{MIP}^{{\mathbb {SIG}}}_{\underset{\text{\tiny{poly}}}{\mathbb {SIG}}} = \mathbf{ZK}\!\mathbf{IP}$.

\subsection{A note on notation}
$$\mathlarger{\mathlarger{\mathlarger{\mathbf{ZK}^{{\mathbb {S}}}\!\!\;\mathbf{MIP}^{{\mathbb {P}}}_{{\mathbb {V}}}}}}$$ is the complexity class of Zero-Knowledge Multi-provers Interactive Proofs where (honest and dishonest) provers are restricted to non-locality class $\mathbb {P}$ (important for soundness),
where the honest verifier is from non-locality class $\mathbb {V}$ (also important for soundness), and where the Zero-Knowledge simulators are from non-locality class $\mathbb {S}$ unless $\widehat{V^\prime}$ is outside of $\mathbb {S}$ in which case they are from the class of $\widehat{V^\prime}$.


\end{document}